\documentclass[journal]{IEEEtran}
\IEEEoverridecommandlockouts
\usepackage{url}            
\usepackage{amsfonts}       
\usepackage{amsthm}
\usepackage{amsmath}
\usepackage{nicefrac}       
\usepackage{microtype}      
\usepackage{lipsum}
\usepackage{fancyhdr}  
\usepackage{graphicx}       
\graphicspath{{media/}}     
\usepackage[shortlabels]{enumitem}
\usepackage{booktabs,array}
\usepackage{multirow}
\usepackage{multicol}
\usepackage{makecell}
\usepackage{mathtools}
\usepackage{cite}
\usepackage{caption}
\usepackage{pgfplots}
\usepackage{pgfplotstable}
\usepackage{algorithm}
\usepackage{algpseudocode}
\usepackage{subcaption}

\usepackage{tikz}
\usetikzlibrary{shapes.geometric, arrows, arrows.meta, calc, fit, positioning, backgrounds, decorations.pathreplacing, decorations.markings}

\usepgfplotslibrary{groupplots}
\pgfplotsset{compat = newest}
\pgfplotsset{compat=1.16}
\usepgfplotslibrary{statistics}

\newcommand*{\ReadOutElement}[4]{%
    \pgfplotstablegetelem{#2}{[index]#3}\of{#1}%
    \let#4\pgfplotsretval
}

\usepackage{balance}
\usepackage{tikz}
\usepackage{makecell}
\usepackage{pgfplots}
\usepackage{pgfplotstable}
\usepgfplotslibrary{groupplots}
\usepackage{tikz}
\usetikzlibrary{shadows} 

\newtheorem{proposition}{Proposition}

\definecolor{darkgreen}{rgb}{0.0,0.5,0.0}

\def\BibTeX{{\rm B\kern-.05em{\sc i\kern-.025em b}\kern-.08em
    T\kern-.1667em\lower.7ex\hbox{E}\kern-.125emX}}
\begin{document}

\title{A mechanism for TCP-aware Active Queue Management in packet-forwarding devices\\
}

\author{
\IEEEauthorblockN{Swarnarup Bhunia\IEEEauthorrefmark{1}, Somit Gond\IEEEauthorrefmark{1}, Vayun Goel, Sreelakshmi Manjunath\IEEEauthorrefmark{2} and Padmanabhan Rajan}
\IEEEcompsocitemizethanks{\IEEEcompsocthanksitem\IEEEauthorrefmark{1}Equal contribution \IEEEauthorrefmark{2}Corresponding author}\\
\IEEEauthorblockA{\textit{School of Computing and Electrical Engineering}, 
\textit{Indian Institute of Technology Mandi, HP, India}\\
\texttt{\{swarnarupbhunia2003, somitgond2\}@gmail.com, \{sreelakshmi, padman\}@iitmandi.ac.in}}
}


\maketitle

\begin{abstract}
The deployment of Active Queue Management (AQM) policies is known to be a promising solution to bufferbloat, the problem of persistently-full large buffers in Internet routers. However, most AQMs involve tunable parameters, and some require awareness of the characteristics of the transport-layer protocol implemented at the end systems, thereby rendering their deployment infeasible. We consider one such AQM\textemdash a threshold-based queue policy which requires the packet-dropping threshold to be co-designed with relevant TCP parameters. We propose a mechanism that detects network congestion, estimates the congestion window evolution in the end hosts, estimates the required TCP parameter, and produces an optimal value of the packet-dropping threshold. Arguments based on analysis of the TCP fluid model are presented to substantiate the proposed mechanism. A simulation-based performance evaluation reveals the effectiveness of the proposed mechanism, compared to some widely implemented queue policies, in terms of various user- and system-oriented metrics. The proposed mechanism is particularly well-suited for implementation in Software Defined Networks (SDN) where the algorithms can run in the SDN controller and the packet-dropping threshold can be implemented in the data-plane switches through open interfaces. 
\end{abstract}

\begin{IEEEkeywords}
Bufferbloat, Active Queue Management, Latency, Congestion Control
\end{IEEEkeywords}

\let\thefootnote\relax\footnotetext{A part of this work appears as Bhunia, Swarnarup, et al., ``A mechanism for TCP-aware adaptive management of queues in Internet routers.'' in Proceedings of the 18th International Conference on COMmunication Systems and NETworks (COMSNETS). IEEE, 2026.}

\section{Introduction}
\label{sec:Introduction}

Oversized buffers in packet-forwarding devices, such as routers and switches, are known to cause large queues that lead to increased latency across the Internet~\cite{McFillin2022}. This phenomenon, termed \emph{bufferbloat}~\cite{gettys2012bufferbloat}, is attributed to the use of the simple DropTail policy in these queues and is known to be the cause of congestion collapse in the nascent stages of the Internet~\cite{allman2012comments,cerf2014bufferbloat}. The use of effective Active Queue Management (AQM) strategies has been identified as a solution~\cite{gettys2012bufferbloat,nichols2012controlling, harrison2023buffer}. Although numerous AQM policies have been proposed; see the survey in~\cite{adams2012active}, most of them require precise parameter tuning, making them difficult to deploy in edge or core routers. Consequently, AQMs are not actively deployed in today's networks. This prompted the Internet Engineering Task Force to update the recommendations for the selection and deployment of AQMs, see RFC 7567~\cite{baker2015ietf}. The most important, arguably, of these recommendations was that an AQM must be able to self-tune its parameters on the basis of operational conditions. 

In the recent past, machine learning (ML) approaches have been adopted for AQM design, leading to AQMs which can self-tune their parameters. 
For a detailed survey, see~\cite{toopchinezhad2025machine}. Most of these AQMs, for example, LearnQueue~\cite{bouacida2018practical}, RLECN~\cite{jung2023rlecn}, PET~\cite{cheng2024pet} etc., use reinforcement learning to tune buffer size or packet-marking thresholds. Implementation of such AQMs requires large amounts of training data and their operation incurs significant computational overhead~\cite{toopchinezhad2025machine}. Moreover, the self correcting mechanisms inherent to reinforcement learning may cause instabilities in the network leading to increased jitter and unfairness.
Consequently, one cannot definitively conclude that these ML-based AQMs can be deployed in real-world networks or that they outperform traditional ones~\cite{toopchinezhad2025machine,jiang2021machine}.
%

Parallelly, the emergence of Software Defined Network (SDN) architectures has enabled the implementation of the control plane of the network layer in remote servers. With this, queue-management decisions at packet-forwarding devices can now rely on algorithms implemented in remote servers, thereby leveraging their computational capabilities. In particular, the SDN controller can implement algorithms that enable self-tuning of AQM policies deployed in SDN switches. However, despite this capability, existing research on AQM for SDNs has largely focused on developing enhancements of classical AQM schemes that are amenable to implementation in programmable SDN switches. 
For example, P4-CoDel~\cite{kundel2018p4} is based on CoDel, iRED~\cite{de2026ired} on Random Early Detection (RED), and PI2~\cite{de2016pi2} is an enhancement of the Proportional Integral (PI) queue policy. 
Although implementable, these enhanced AQMs carry the demerits of the underlying core techniques. For instance, iRED uses an exponential weighted moving average (EWMA) of the queue size to make packet-dropping decisions~\cite{de2026ired}, similar to the original RED policy, which is known to make networks prone to instability and their operation highly sensitive to the choice of EWMA parameters and packet-dropping thresholds~\cite{la2004instability}. Consequently, practical deployment of AQMs remains challenging. 

Studies reveal that to effectively deploy most AQMs, their parameters and packet-dropping/marking thresholds need to be tuned in accordance with TCP parameters, inherent to the congestion control algorithm in the transport layer of the end hosts, which are unknown to the router/switch deploying the AQM~\cite{hollot2002analysis,manjunath2018stability,manjunath2019compound,raina2015stability}. 
Leveraging this, we develop a mechanism that can learn the congestion-control behaviour of the transport protocol implemented in the end hosts and make optimal packet-dropping decisions at the routers/switches, thereby enabling an AQM to self-tune its parameters for prevalent operational conditions. In this paper, we focus on a recently proposed threshold-based queue policy~\cite{manjunath2018stability,manjunath2019compound}, which requires that its packet-dropping threshold be co-designed with relevant TCP parameters. We outline a mechanism that estimates the congestion window maintained by the end hosts and estimates the required TCP parameter(s) from the estimated congestion window, and then computes the packet-dropping threshold based on the estimate. The so-computed threshold is used to activate the AQM once the router/switch experiences congestion. The proposed mechanism uses (i) a simple supervised learning technique to estimate the congestion window, and (ii) fundamental principles of TCP congestion control to estimate the relevant parameters. The use of supervised learning lowers the training requirement, compared to other reinforcement-learning-based AQMs which typically require rigorous training and precise tuning. Furthermore, our approach is grounded in mathematical understanding of the underlying TCP-AQM feedback system, which confirms that maintaining control-theoretic stability of the feedback system guarantees improved network performance. 
Therefore, the proposed algorithms promise stable operation of the network at an operating point that ensures improved performance. 
The performance of the proposed mechanism is evaluated using a simulation-based study conducted on ns-3~\cite{riley2010ns}. It is compared with the widely used DropTail policy and the Adaptive RED (ARED) policy which is currently implemented\textemdash although not deployed\textemdash in most IP routers and SDN switches. The mechanism, supported by analytical arguments and inferences from the simulations, leads to TCP-aware queue management for low-latency operation of packet-forwarding devices. It is particularly well suited to SDN architectures where all the computations required for the learning-based decision making are performed in the SDN control server and the decisions are installed in the switches through P4 programming~\cite{bosshart2014p4}. This makes the algorithms implementable in current \textit{light-weight} SDN switches. 

The key contributions of this paper are as follows.
\begin{enumerate}[(i)]
    \item An algorithm for congestion detection through observations of queue size at a packet-forwarding device. 
    \item Application of a simple supervised-learning technique to estimate the congestion window maintained at the end hosts from sequence numbers from the IP datagram payload. 
    \item Methodology for estimating a representative TCP parameter from the estimated congestion window traces supported by relevant mathematical derivations. 
    \item An algorithm for computing the packet-dropping threshold, using the estimated TCP parameter in conjunction with an established control-theoretic result, for stable operation of the network with improved network performance.
    \item Simulation-based performance evaluation with relevant topologies, traffic compositions and metrics, and comparison with relevant baseline policies.
\end{enumerate}
The rest of the paper is organised as follows. The preliminaries and the problem formulation are presented in Section~\ref{sec:Preliminaries}. Section~\ref{sec:ProposedMechanism} outlines the proposed mechanism, and Section~\ref{sec:PerformanceEvaluation} presents the simulations and results. We conclude in Section~\ref{sec:Conclusions}.

\section{Preliminaries}
\label{sec:Preliminaries}
Consider a single bottleneck link shared by $N$ long-lived TCP flows. Treating the evolution of the congestion window maintained by a sending host as a Poisson counter driven stochastic differential equation yields a fluid model for the evolution of the expectation of the congestion window~\cite{misra2000fluid}. When the Poisson processes pertaining to the $N$ end hosts are assumed to be independent and identically distributed, the aforementioned fluid model captures the \emph{average} congestion window size of the $N$ TCP flows~\cite{misra2000fluid}. Denote this average congestion window at time $t$ as $w(t)$.
Suppose that the end hosts deploy a TCP variant that follows Additive Increase Multiplicative Decrease (AIMD) congestion control, where it increases the sending window by a general additive increase (AI) function, say $i\big(w(t)\big)$, per acknowledgement received and reduces it by a multiplicative decrease (MD) $d\big(w(t)\big)$ per packet drop, the fluid model for the evolution of the average congestion window $w(t)$ is updated as~\cite{raina2005part}
\begin{align}
 \dot{w}(t) = \bigg(i\big(w(t)\big)\big(1-p(t-\tau)\big)-d\big(w(t)\big)p(t-\tau)\bigg)\frac{w(t-\tau)}{\tau},\label{eq:gen_TCP}
\end{align}
where $p(t)$ represents the packet-drop probability and $\tau$ is the average round-trip time (RTT) of the TCP flows. In the treatment of TCP fluid models, it is assumed that packets are dropped, if at all, only at the bottleneck link. 
Now, consider the multiplicative decrease component alone. The decrease function can be written as $d(w) = \beta w(t),$ where  $\beta \in (0,1)$ is the MD parameter. 
In particular, when a packet is lost, the average congestion window is updated as~\cite{raina2005part, manjunath2018stability} 
    \begin{align}
        w(t+1) = (1-\beta)w(t). \label{eq:avg_cwnd_betaestimate} 
    \end{align}
Note that, the congestion window update inherently depends on the packet-drop probability as TCP infers network congestion using the acknowledgement received; this probability depends on the packet-dropping decisions made by the AQM in a packet-forwarding device such as a router/switch. 

The threshold-based queue policy proposed in~\cite{manjunath2018stability,manjunath2019compound} makes binary packet-dropping decisions as
\begin{align*}
    \begin{cases}
    \text{if } q(t) \geq q_{th}, & \text{drop packet}\\
    \text{else}, & \text{enqueue}
    \end{cases}
\end{align*}
where $q(t)$ is the size of the queue, in the router/switch preceding the bottleneck link, at time $t$ and $q_{th}$ is a tunable parameter of the queue policy that serves as a packet-dropping threshold. When the packet-drop probability $p(t)$ is modelled based on the binary decisions made by the threshold-based queue policy, we obtain a closed-loop fluid model for a set of TCP flows feeding into a bottleneck link managed by this queue policy. A control-theoretic analysis of the fluid model reveals that the threshold-based queue policy can maintain network stability and decongest the bottleneck link, while ensuring reduced queueing delays, when $q_{th}$ is tuned according to the sufficient condition~\cite{manjunath2018stability}
\begin{align}
    q_{th}\,\beta\, w^\ast\, \left(\frac{w^\ast}{\tilde{C}\tau}\right)^{q_{th}} \leq \pi/2.\label{eq:sufficient_condition}
\end{align}
Here, $w^\ast$ is the steady-state average congestion window of the TCP flows, which is the fixed point of the fluid model~\eqref{eq:gen_TCP} obtained using $\dot{w}(t) = 0$~\cite{manjunath2018stability}. Further, $\tilde{C}$ is the per-flow capacity of the bottleneck link. 

To implement the threshold-based queue policy and reap its benefits, a router/switch must be able to compute the packet-dropping threshold $q_{th}$ that satisfies condition~\eqref{eq:sufficient_condition}. This computation requires knowledge of $\tau$, $w^\ast$ and $\beta$, assuming that the capacity of the output link $\tilde{C}$ is known to the router. Using ACK-clocking methods, the router can estimate the average RTT of the TCP flows, $\tau$. In order to estimate $w^\ast$, the router/switch needs to estimate the evolution of the congestion window maintained by the end hosts.
Further, the estimated congestion windows of the hosts can be used to estimate the parameter $\beta$ through equation~\eqref{eq:avg_cwnd_betaestimate}. 

Recall that the aforementioned TCP fluid model and the analysis rely on the assumption that all hosts use the same AIMD-based congestion control algorithm in the default parameter settings. 
Despite this similarity, the empirical average of the congestion windows of TCP flows sharing a bottleneck link does not strictly follow equation~\eqref{eq:avg_cwnd_betaestimate}; a more elaborate discussion on this is deferred to Section~\ref{sec:parameter_estimation}. 
In such a scenario, one must estimate a representative $\beta$ parameter that governs the multiplicative decrease in the average congestion window. With this estimate at hand, inequality~\eqref{eq:sufficient_condition} can be used to compute a value of the threshold $q_{th}$ required to deploy the threshold-based policy.  



\begin{figure}
    \centering
        \begin{subfigure}{\linewidth}
    \centering
    \begin{tikzpicture}[titlerect/.style={draw,inner ysep=0.5ex,inner xsep=0.5ex,  align=center,label={[anchor=center,fill=white]above:#1}}]
    
    \node[rectangle, rounded corners, text width= 2.05 cm, inner ysep=0.5ex,inner xsep=0.4ex, draw] (CD) at (0,0) {\begin{small}
     \hspace{-1mm}(A) Congestion detection\\
     \textbf{Algorithm 1}
    \end{small}
    };

    \node[rectangle, rounded corners, text width=2.05 cm, below = 1cm of CD, inner ysep=0.5ex, inner xsep=0.4ex, draw] (CWP)
    {\begin{small}
    \hspace{-1mm}(B) Congestion window estimation\\
    \textbf{Random forest regressor}
    \end{small}
    };

    \node[diamond, rounded corners, right = 0.8cm of CD,  draw, inner ysep=0ex,inner xsep=0ex] (CDQ) {\begin{small}
    Congestion?
    \end{small}
    };

    \node[rectangle, rounded corners, inner ysep=0.5ex,inner xsep=0.6ex, left=0.5cm of CD, text width=1cm, yshift=-1.25cm] (QnNI) {\begin{small}{Inputs from switch}\end{small}};

    
    \node[rectangle, rounded corners, text width= 2.25 cm,below =0.7cm of CDQ,inner ysep=0.5ex,inner xsep=0.4ex,  draw] (BE) {\begin{small}
     \hspace{-1mm}(C) $\beta$ estimation
    \end{small}
    };
    \node[rectangle, rounded corners, text width= 2.25 cm,below = 0.5cm of BE,inner ysep=0.5ex,inner xsep=0.4ex, draw] (QT) {\begin{small}
     \hspace{-1mm}(D) Threshold $q_{th}$ computation\\
     \textbf{Algorithm 2}
    \end{small}
    };
    \node[rectangle, rounded corners,text width=0.3cm, inner ysep=0ex,inner xsep=0.1ex, right = 0.65cm of QT] (Dep) {\begin{small}{$q_{th}$}\end{small}};

    \node[titlerect={\begin{small}Proposed Mechanism\end{small}}, draw,inner ysep=7pt, inner xsep=7pt, dashed,fit={(CD) (BE) (CDQ) (QT)}] (PM) {};
    
    \draw[-triangle 60] ([yshift=0.65cm]CWP.east) -- node[below] {\scriptsize{w$_i(t),$}} (BE.west); 
    
    \draw[-triangle 60] ([yshift=-0.4cm]CWP.east) -- node[above,yshift=0.25cm] {\hspace{5mm}\scriptsize{$ i \in [1,N]$}} ([yshift=0.3cm]QT.west); 
    
    \draw[-triangle 60] (CD.east) -- (CDQ.west);
    
    \draw[-triangle 60] (CDQ.south) -- node[right] {\scriptsize{Yes}} (BE.north);
    
    \draw[-triangle 60] (BE.south) -- (QT.north);
    
    \draw[-triangle 60] (QT.east) -- (Dep.west);

    \draw[-triangle 60] (QnNI.north) |- node[yshift=0.275cm,xshift=-0.1cm,above] {\small{Queue}} (CD.west);
    \draw[-triangle 60] (QnNI.north) |- node[above] {\small{size, $q(t)$}} (CD.west);
    
    \draw[-triangle 60] (QnNI.south) |- node[below] {\small{IP Adds}} (CWP.west);
    \draw[-triangle 60] (QnNI.south) |- node[yshift=-0.3cm,xshift=0.05cm,below] {\small{Port nos.}}  (CWP.west);

    
    \end{tikzpicture}
    \caption{Schematic representation of the proposed mechanism.}
    \label{fig:schematic_PM}
    \end{subfigure}\\
    \vspace{3mm}
    \begin{subfigure}{\linewidth}
    \centering
    \begin{tikzpicture}[titlerect/.style={draw,inner ysep=0.5ex,inner xsep=0.5ex,  align=center,label={[anchor=center,fill=white]above:#1}}]
    \node[trapezium,
        trapezium left angle=70,
        trapezium right angle=110, text width= 3.5 cm, inner ysep=0.25ex,inner xsep=0.2ex, draw, drop shadow] (SDN) at (0,0) {\begin{center}\begin{small}
    SDN control server\\
    Implements (A, B, C, D)
    \end{small}\end{center}
    };
    \node[rectangle, below =2cm of SDN, minimum width=1.75cm,minimum height = 0.35cm, draw,xshift=-0.25cm] (Que) {};
    
    \node[rectangle, rounded corners,text width=0.3cm, inner ysep=0.15ex,inner xsep=0.1ex, left=1.25cm of Que,yshift=0.3cm] (IN) {\scriptsize{Ingress Pkts}};
    
    \foreach \i in {0,1,2} {
    \path let \p1 = (IN) in
    node[fill=blue!40,draw=blue!80,
           minimum width=0.1cm, minimum height=0.25cm]
      (pkt\i) at ($(\p1)+(0.5+0.35*\i,-0.3)$) {};
    }

    \foreach \j in {0,1,2,3,4} {
    \path let \p2 = (IN) in
    node[fill=blue!40,draw=blue!80,
           minimum width=0.1cm, minimum height=0.25cm]
      (pkt_que\j) at ($(\p2)+(1.95+0.27*\j,-0.3)$) {};
    }

    \node[fill=blue!40,draw=blue!80,
           minimum width=0.1cm, minimum height=0.25cm,right=0.12cm of pkt_que4]
      (pkt_in) {};

    \foreach \k in {0,1,2} {
    \path let \p3 = (Que) in
    node[fill=red!40,draw=red!80,
           minimum width=0.1cm, minimum height=0.25cm]
      (pkt_drop\k) at ($(\p3)+(-0.5, -1.2+0.35*\k)$) {};
      }

    \node[circle, right =0.025cm of Que, minimum width=0.4cm, draw] (QueSer) {};
    \node[below=0.05cm of Que,xshift=0.425cm] (Buffer){\scriptsize{Buffer}};

    \node[rectangle, rounded corners,text width=0.3cm, inner ysep=0.2ex,inner xsep=0.1ex, above=0.2cm of Que,xshift=-0.05cm] (Sw) {\begin{small}Switch\end{small}};
    
    \node[rectangle, draw,inner ysep=2pt, inner xsep=7pt, solid, fit={(Que) (QueSer) (Sw) (Buffer)}] (Switch) {};

    \draw[-triangle 60] ([xshift=-0.2cm]Switch.north) -- node[left,yshift=0.125cm] {\small{Queue size}} node[left,yshift=-0.15cm]{\small{IP Adds.}} node[left,yshift=-0.45cm]{\small{ Port nos.}}([xshift=-0.25cm]SDN.south);
    \draw[-triangle 60] ([xshift=0.25cm]SDN.south) --  node[right] {\small{Threshold $q_{th}$}}([xshift=0.275cm]Switch.north);

    \node[rectangle, rounded corners,text width=0.3cm, inner ysep=0.1ex,inner xsep=0ex, right=1.25cm of QueSer] (For) {\scriptsize{Forward}};

    \draw[-triangle 60] ([xshift=0.25cm]QueSer.east) -- (For.west);
    \end{tikzpicture}
      \caption{Implementation framework for the proposed mechanism.}
      \label{fig:schematic_implementation}
    \end{subfigure}
    \caption{Pictorial representation showing the proposed mechanism and its implementation.}
    \label{fig:blockDiagram}
\end{figure}

\subsection{Problem formulation}
Deploying the threshold-based queue policy in a packet-forwarding device boils down to implementing algorithms for the following.
\begin{enumerate}[(i)]
    \item Detect congestion at the router/switch by observing the queue size. 
    \item Estimate the congestion window evolution for all end hosts sharing the bottleneck link.
    \item Estimate the multiplicative decrease parameter $\beta$ that governs the decrease in the average congestion window after packet loss in the event of congestion. 
    \item Compute an appropriate value of the packet-dropping threshold $q_{th}$ which decongests the router/switch and maintains low queueing delays. 
\end{enumerate}
For developing the underlying algorithms for the mechanism, we assume that all the end systems use AIMD-based congestion control and all the flows are long-lived, \emph{i.e.}, flows pertaining to FTP applications that last long enough to drive the system into a steady state. This simplistic choice of traffic composition aligns with the assumptions of the fluid models employed in the study of TCP-AQM feedback systems and ensures the applicability of the analytical result~\eqref{eq:sufficient_condition}. Once we develop the required algorithms,  we systematically deviate from the assumptions and evaluate their performance in more evolved network topologies and realistic traffic mixes through simulations. We now outline the required algorithms.
\section{Proposed Mechanism}
\label{sec:ProposedMechanism}
The proposed mechanism is divided into four parts, each of which is described below. The interplay between these is shown in Fig.~\ref{fig:schematic_PM}, and the possible implementation is depicted in Fig.~\ref{fig:schematic_implementation}. 
\subsection{{Congestion detection}}
\label{sec:CongestionDetection}
It is well known that congestion causes synchronisation of TCP flows across hosts sharing a link, and results in pronounced deterministic oscillations in the queue size~\cite{han2005synchronization,ghosh2020right,raina2005buffer}. Building on this premise, we use queue size oscillations as an indicator of incipient congestion. Such oscillations can be detected and distinguished from other variations in the queue size by computing the autocorrelation of the queue size data. Oscillations in the queue size with relatively higher amplitude and lower frequency, which indicate the onset of congestion, cause the autocorrelation of the queue size to have lesser number of zero crossings. This relationship is illustrated in Fig.~\ref{fig:CongestionDetection}, where we plotted the queue size at the router observed through packet-level simulations of a single-bottleneck link of 100 Mbps capacity serving $N=60$ TCP hosts. The conventional DropTail policy is used for queue management. We used the TCP NewReno variant and simulated two regimes, one where the TCP flows have a small average RTT of 10 ms, and another where the average RTT is 200 ms. A large bandwidth-delay product, or in particular large RTTs, is known to cause congestion in networks that use DropTail~\cite{raina2005buffer}; indeed, we see queue size oscillations in this case with large amplitude and low frequency as shown in Fig.~\ref{fig:CongestionDetection}(b). We computed the autocorrelation of the queue size, considering all possible lags. We observed that the autocorrelation exhibits a lesser number of zero crossings for the large RTT case. 
Therefore, observing the number of zero crossings of the autocorrelation of the queue size can help detect congestion. Algorithm~\ref{alg:CongestionDetection} outlines congestion detection using this mechanism. 


\begin{figure}[t]
	\centering
    \begin{tabular}{c c}
	\begin{tikzpicture}
		\begin{axis}[
		yticklabel style={
			/pgf/number format/fixed,
			/pgf/number format/precision=2,
			/pgf/number format/fixed zerofill,
			/pgf/number format/fixed relative,
		},
		every axis plot/.append style={thick},
        every tick label/.append style={font=\normalsize},
        every axis label/.append style={font=\normalsize},
        every axis y label/.append style={yshift=-3pt},
		ylabel = {Queue size (pkts)},
		xlabel = {Time (s)},
		xmin = 100, xmax = 200,
		ymin = 0, ymax = 2200, 
        xtick={105,200},
        xticklabels={$100$,$200$},
		ytick={0,2084},
        yticklabels={$0$,$B$},
		grid = both,
		minor tick num = 1,
		major grid style = {lightgray},
		minor grid style = {lightgray!25},
		width=1.75in,
        height=2in,
		legend cell align = {right},
		legend pos = south west
		]
        \addplot[blue]  table [x = {t}, y ={q}] {Data/tc-qsizeTrace-dumbbell_rtt15ms.txt};
		\end{axis}
		\end{tikzpicture}
      & \hspace{-0.45cm}
	\begin{tikzpicture}[thick]
		\begin{axis}[
		yticklabel style={
			/pgf/number format/fixed,
			/pgf/number format/precision=2,
			/pgf/number format/fixed zerofill,
			/pgf/number format/fixed relative,
		},
		every axis plot/.append style={thick},
        every tick label/.append style={font=\normalsize},
        every axis label/.append style={font=\normalsize},
        every axis y label/.append style={yshift=-3pt},
		ylabel = {Queue size (pkts)},
		xlabel = {Time (s)},
		xmin = 100, xmax = 200,
		ymin = 0, ymax = 2200, 
        xtick={105,200},
        xticklabels={$100$,$200$},
		ytick={0,2084},
        yticklabels={$0$,$B$},
		grid = both,
		minor tick num = 1,
		major grid style = {lightgray},
		minor grid style = {lightgray!25},
		width=1.75in,
        height=2in,
		legend cell align = {right},
		legend pos = south west
		]
        \addplot[blue,style=very thick]  table [x = {t}, y ={q}] {Data/tc-qsizeTrace-dumbbell_rtt200ms.txt};
		\end{axis}
		\end{tikzpicture}
        \\
        \hspace{-0.15cm}
	\begin{tikzpicture}
		\begin{axis}[
		yticklabel style={
			/pgf/number format/fixed,
			/pgf/number format/precision=2,
			/pgf/number format/fixed zerofill,
			/pgf/number format/fixed relative,
		},
		every axis plot/.append style={thick},
        every tick label/.append style={font=\normalsize},
        every axis label/.append style={font=\normalsize},
        every axis y label/.append style={yshift=-12pt},
		ylabel = {Autocorrelation},
		xlabel = {Index},
		xmin = 2447, xmax = 4897,
		ymin = -0.5, ymax = 0.5, 
        xtick={2447,4897},
        xticklabels={$0$,$2447$},
		ytick={-0.5,0,0.5},
        yticklabels={$-0.5$, $0$,$0.5$},
		grid = both,
		minor tick num = 1,
		major grid style = {lightgray},
		minor grid style = {lightgray!25},
		width=1.75in,
        height=2in,
		legend cell align = {right},
		legend pos = south west
		]
        \addplot[blue]  table [x expr=\coordindex, y ={a}] {Data/autocorr_rtt15.txt};
		\end{axis}
		\end{tikzpicture}
      & 
      \hspace{-0.45cm}
	\begin{tikzpicture}
		\begin{axis}[
		yticklabel style={
			/pgf/number format/fixed,
			/pgf/number format/precision=2,
			/pgf/number format/fixed zerofill,
			/pgf/number format/fixed relative,
		},
		every axis plot/.append style={thick},
        every tick label/.append style={font=\normalsize},
        every axis label/.append style={font=\normalsize},
        every axis y label/.append style={yshift=-12pt},
		ylabel = {Autocorrelation},
		xlabel = {Index},
		xmin = 2447, xmax = 4897,
		ymin = -0.5, ymax = 0.5, 
        xtick={2447,4897},
        xticklabels={$0$,$2447$},
		ytick={-0.5,0,0.5},
        yticklabels={$-0.5$, $0$,$0.5$},
		grid = both,
		minor tick num = 1,
		major grid style = {lightgray},
		minor grid style = {lightgray!25},
		width=1.75in,
        height=2in,
		legend cell align = {right},
		legend pos = south west
		]
        \addplot[blue]  table [x expr=\coordindex, y ={a}] {Data/autocorr_rtt200.txt};
		\end{axis}
		\end{tikzpicture}\\
        (a) RTT $=10$ ms & (b) RTT $=200$ ms 
        \end{tabular}
	\caption{Plots showing queue size vs time (top panel) and autocorrelation of the queue size (bottom panel), for small and large RTT regime. Here, $B$ is the size of buffer. Large RTT implies a large bandwidth-delay product regime, where congestion is known to occur. Observe that the queue size exhibits pronounced oscillations in this case. When the queue size oscillates excessively, its autocorrelation has lesser zero crossings.}
	\label{fig:CongestionDetection}
\end{figure}

\begin{algorithm}
\caption{Congestion detection}\label{alg:CongestionDetection}
\begin{algorithmic}
\State \textbf{Input}: $q(t)$, $\mathcal{T}$, $m_z$ \Comment{Input, Parameters}
\State $t_{curr} \gets 0,\, flag \gets 0,\, temp  = \phi$,
\While{1}
    \State $ t_{nxt} \gets t_{curr} + \mathcal{T}$\\
    \hspace{5mm}wait until $t_{nxt}$
    \State $q_s \gets q[t_{curr}: t_{nxt}]$
    \State $ar \gets$ normalized autocorrelation$(q_s)$
    \State $z \gets$ zero crossings$(ar)$
    \If{ $z < m_z$}
        \State $temp \gets temp \cup {1}$ \Comment{Append 1 to $temp$}
    \Else
        \State $temp \gets temp \cup {0}$ \Comment{Append 0 to $temp$}
    \EndIf
    \State $n \gets \text{length}(temp)$
    \If {$n > 2$ and $\displaystyle{\sum_{i=n-2}^{n}} temp[i] = 3$} 
    \vspace{2mm}\\
    \hfill{\Comment{Check if $z < m_z$ in $3\mathcal{T}$}}
    \State $flag \gets 1$ \Comment{Congestion detected}
    \EndIf
    \State $t_{curr} \gets t_{nxt}$
    \hspace{5mm}\State \textbf{Output:} $flag$
    \State $flag \gets 0$ \Comment{Reset $flag$}
\EndWhile
\end{algorithmic}
\end{algorithm}


\begin{figure}[tbh!]
  \centering
        \begin{tikzpicture}[thick]
        \def\xlabel{False Positive Rate}
        \def\ylabel{True Positive Rate}
        \def\ymin{0.0}
        \def\ymax{1.05}
        \begin{axis}[
            xlabel={\xlabel},
            ylabel={\ylabel},
            xmin=0.0, xmax=1.0,
            ymin=\ymin, ymax=\ymax,
            xtick={0.0,0.2,...,1.0},
            legend pos=south east,
            legend style={draw=none},
            grid=both,
            major grid style={dashed,line width=0.5pt,draw=gray!50},
            minor grid style={dotted,line width=0.25pt,draw=gray!25},
            yticklabel style={/pgf/number format/fixed},
            x tick label style={anchor=north},
            width=3in,
            height=2.5in,
        ]
        
        \addplot[
            color=blue,very thick,
            mark=*,
            mark options={fill=blue}
        ] table[x=FPR,
        y=TPR, 
        col sep=comma,
        restrict expr to domain={\thisrow{winSize}}{5:5}] {Data/roc_analysis_data.csv};
        \addlegendentry{$\mathcal{T}$= 5s}

        \addplot[
            color=red, very thick,
            mark=square*,
            mark options={fill=red},
            dotted
        ] table[x=FPR,
        y=TPR, 
        col sep=comma,
        restrict expr to domain={\thisrow{winSize}}{10:10}] {Data/roc_analysis_data.csv};
        \addlegendentry{$\mathcal{T}$ = 10s}

        \addplot[
            color=darkgreen, very thick,
            mark=triangle*,
            mark options={fill=darkgreen},
            dashed
        ] table[x=FPR,
        y=TPR, 
        col sep=comma,
        restrict expr to domain={\thisrow{winSize}}{15:15}] {Data/roc_analysis_data.csv};
        \addlegendentry{$\mathcal{T}$ = 15s}

        \addplot[
            color=brown,very thick,
            mark=diamond*,
            mark options={fill=brown},
            dash dot
        ] table[x=FPR,
        y=TPR, 
        col sep=comma,
        restrict expr to domain={\thisrow{winSize}}{20:20}] {Data/roc_analysis_data.csv};
        \addlegendentry{$\mathcal{T}$ = 20s}

        \end{axis}
        \end{tikzpicture} 
        \caption{ROC plots showing variation of $\mathcal{T}$ and $m_z$ }
        \label{fig:roc_plots}
\end{figure}

In Algorithm~\ref{alg:CongestionDetection}, $m_z$ is the zero-crossing threshold used to determine the presence of congestion. The value of $m_z$ is decided by the operating point of the algorithm; i.e.~the relative cost of missing the presence of congestion, versus the cost of falsely declaring congestion when there is actually none. Assuming that the cost of false alarms are negligible, the operating point should be selected to minimize misses.
When the number of zero crossings in a time duration of $\mathcal{T}$ is less than $m_z$, then the algorithm concludes that the link is congested. 

The parameter $\mathcal{T}$ represents a time duration for checking the queue size traces for deterministic oscillations. Although the choice of $\mathcal{T}$ is left to the implementation and may vary across routers, it is important to note that $\mathcal{T}$ must be at least a few multiples of RTT. This is because it takes at least a few RTTs for TCP hosts to infer congestion in the network and react to packet loss by reducing their congestion window. Setting $\mathcal{T}$ to a large value, say a few seconds, would ensure that the router detects congestion successfully. However, this may result in a delay in detecting congestion, which may impact network performance. 

Towards understanding the influence of the parameter $\mathcal{T}$ on the ability of Algorithm~\ref{alg:CongestionDetection} to accurately detect congestion, we examined the Receiver Operating Characteristic (ROC) curve of the detector. For this we conducted 100 simulations, of the aforementioned single-bottleneck link, of which 50 were of a congested scenario while the remaining did not have congestion. Taking it as ground truth we compared our congestion detection approach with a range of thresholds $m_z$, for various values of the time duration $\mathcal{T}$. Using this, we plotted the ROC curve to visualize the results. The results are presented in Fig.~\ref{fig:roc_plots}. It can be seen that most values of $\mathcal{T}$ give similar true positive rates (close to 100\%.) To reduce latency, we thus choose to implement Algorithm \ref{alg:CongestionDetection} with $\mathcal{T}$ = 5 seconds.

\subsection{Congestion window estimation}
Whilst detecting congestion, it is also crucial to maintain an accurate estimate of the congestion window of each TCP flow at the router which upon detection of congestion, can help calculate the required parameters to eventually estimate the optimal packet-dropping threshold $q_{th}$ to resolve congestion. This can be achieved by (1) sniffing the sequence number and related fields from the IP datagram payload at the router and then (2) utilising an intelligent algorithm which can estimate the corresponding congestion window size maintained at the TCP sender's end for that corresponding TCP flow. 
Our congestion window estimation methodology involves sniffing the following header fields: the source and destination IP addresses and port numbers, TCP sequence number, acknowledgement number, and the packet size. The four tuple of the IP addresses and port numbers help uniquely identify the TCP flow and these sniffed header fields along with the time of packet arrival can be cached and stored as a time series data per TCP flow. The congestion window size estimation is made using a random forest regressor. The time series data is preprocessed to create the following input features namely direction (whether from source to destination or vice versa), sequence number difference, acknowledgment number difference, time difference (current value minus the previous value), packet size and duplicate ACK count. A window size of 10 packets is considered, i.e. input features of the 9 previous packets along with the currently received input packet features are concatenated and fed into the model as input which in turn estimates the congestion window size at that time instant. These estimates are cached for future estimations in the auto-regressive manner as explained.

The implementation from scikit-learn's \texttt{RandomForestRegressor} model initialized with 100 estimators is used. Training data is generated using ns-3 for each simulation topology considered in the study, the details of which are deferred to Section~\ref{sec:PerformanceEvaluation}. More that 0.8 million training examples were used to train the model. 

\begin{figure}[t]
	\centering
	\begin{tikzpicture}
		\begin{axis}[
		yticklabel style={
			/pgf/number format/fixed,
			/pgf/number format/precision=2,
			/pgf/number format/fixed zerofill,
			/pgf/number format/fixed relative,
		},
		every axis plot/.append style={thick},
        every tick label/.append style={font=\normalsize},
        every axis label/.append style={font=\normalsize},
        every axis y label/.append style={yshift=-3pt},
		ylabel = {Congestion window (pkts)},
		xmin = 200, xmax = 300,
		ymin = -5, ymax = 100, 
        xtick={200,300},
		ytick={0,50,100},
        yticklabels={$0$,$50$,$100$},
		grid = both,
		minor tick num = 1,
		major grid style = {lightgray},
		minor grid style = {lightgray!25},
        width=0.5\linewidth,
        height=2in,
		legend cell align = {right},
		legend pos = south east
		]
         \addplot[blue,dash dot]  table [col sep=comma,x = T, y = P, each nth point=5] {Data/rf_prediction_data_node_10_200to300_sizeshrinked.csv};
         \addplot[darkgreen,solid] table[col sep=comma, x=T, y=G,each nth point=5]  {Data/rf_prediction_data_node_10_200to300_sizeshrinked.csv};
		\end{axis}
		\end{tikzpicture}
	    	\begin{tikzpicture}
		\begin{axis}[
		yticklabel style={
			/pgf/number format/fixed,
			/pgf/number format/precision=2,
			/pgf/number format/fixed zerofill,
			/pgf/number format/fixed relative,
		},
		every axis plot/.append style={thick},
        every tick label/.append style={font=\normalsize},
        every axis label/.append style={font=\normalsize},
        every axis y label/.append style={yshift=-3pt},
		xmin = 200, xmax = 300,
		  ymin = -5, ymax = 100,
        xtick={200,300},
		ytick={0,50,100},
		grid = both,
		minor tick num = 1,
		major grid style = {lightgray},
		minor grid style = {lightgray!25},
        width=0.5\linewidth,
        height=2in,
		legend cell align = {right},
		legend pos = north east,        
        legend style={draw=none}
		]
         \addplot[blue,dash dot]  table [col sep=comma,x = T, y = P, each nth point=5] {Data/rf_prediction_data_node_40_200to300_sizeshrinked.csv};
         \addlegendentry{\small{Estimated}}
         \addplot[darkgreen,solid] table[col sep=comma, x=T, y=G,each nth point=5]  {Data/rf_prediction_data_node_40_200to300_sizeshrinked.csv};
        \addlegendentry{\small{Actual}}
		\end{axis}
		\end{tikzpicture}\\
        \vspace{2mm}
        	\begin{tikzpicture}
		\begin{axis}[
		yticklabel style={
			/pgf/number format/fixed,
			/pgf/number format/precision=2,
			/pgf/number format/fixed zerofill,
			/pgf/number format/fixed relative,
		},
		every axis plot/.append style={thick},
        every tick label/.append style={font=\normalsize},
        every axis label/.append style={font=\normalsize},
        every axis y label/.append style={yshift=-3pt},
		ylabel = {Absolute error (pkts)},
		xmin = 200, xmax = 300,
		ymin = -5, ymax = 100,
        xtick={200,300},
		ytick={0,50,100},
		grid = both,
		minor tick num = 1,
		major grid style = {lightgray},
		minor grid style = {lightgray!25},
        width=0.5\linewidth,
        height=2in,
		]
         \addplot[black,solid]  table [col sep=comma,x = T, y = E, each nth point=5] {Data/rf_prediction_data_node_10_200to300_sizeshrinked.csv};
		\end{axis}
		\end{tikzpicture}
        \vspace{-4mm}
        	\begin{tikzpicture}
		\begin{axis}[
		yticklabel style={
			/pgf/number format/fixed,
			/pgf/number format/precision=2,
			/pgf/number format/fixed zerofill,
			/pgf/number format/fixed relative,
		},
		every axis plot/.append style={thick},
        every tick label/.append style={font=\normalsize},
        every axis label/.append style={font=\normalsize},
        every axis y label/.append style={yshift=-3pt},
		xlabel = {},
		xmin = 200, xmax = 300,
		ymin = -5, ymax = 100,
        xtick={200,300},
		ytick={0,50,100},
		grid = both,
		minor tick num = 1,
		major grid style = {lightgray},
		minor grid style = {lightgray!25},
        width=0.5\linewidth,
        height=2in,
		legend cell align = {right},
		legend pos = south east
		]
         \addplot[black,solid]  table [col sep=comma,x = T, y = E, each nth point=5] {Data/rf_prediction_data_node_40_200to300_sizeshrinked.csv};
		\end{axis}
		\end{tikzpicture}\\
        \vspace{4mm}
        \centering{\hspace{6mm}Time (s)}
        \caption{Plots showing the estimated congestion window evolution juxtaposed against the actual, for two arbitrarily chosen hosts. The common legend for the comparison plots is placed in the top-right plot alone.}
\end{figure}

\subsection{Parameter estimation}
\label{sec:parameter_estimation}
With the congestion window estimates at hand, we now need to estimate the multiplicative decrease parameter that governs the decrease in the average congestion window from these estimates. We propose a mathematical model for the required estimation. The correctness of the model is established through the propositions presented below. 

When the congestion-control algorithm in TCP source host $i$ infers packet loss, its congestion window $\mathrm{w}_i$ is reduced as
\begin{align}
    \mathrm{w}_i(t+1) = (1-\beta_i)\mathrm{w}_i(t), \label{eq:window_update_individual}
\end{align}
where $\beta_i$ is the MD parameter of the host $i$. From the estimated congestion window of a host $i$, obtained using the random forest regressor, the SDN controller can use $\mathrm{w}_i(t)$ and  $\mathrm{w}_i(t+1)$ to compute $\beta_i$ using equation~\eqref{eq:window_update_individual}. Although this can be done to obtain the MD parameter of each source, these parameters in themselves do not capture the behaviour of the average congestion window $w(t) = (\sum_{i=1}^N \mathrm{w}_i(t))/N$ whose fixed-point value $w^\ast$ appears in inequality~\eqref{eq:sufficient_condition}. The size of the congestion window w$_i(t)$ of each source $i$ is influenced by its AI function $i(w)$. An aggressive source with a higher increase parameter will have a larger congestion window while a conservatively increasing source will have a smaller window. To that end, each source contributes differently to the average congestion window $w(t)$ as seen by the packet-forwarding device. Therefore, the required estimate, that of a parameter that governs the decrease in $w(t)$, is not a straight-forward average of the parameters $\beta_i$. 

Let us define a loss event as one in which the average congestion window is reduced by a fraction due to reduction in the congestion windows of a subset of TCP sources. Suppose that there are $k$ such loss events. Let $s_e$ be the subset of TCP sources which reduced their congestion window in $e^{th}$ loss event, where $e \in \{1,2, \cdots, k\}$. 

\begin{proposition}
    At each loss event $e \in \{1,2,\cdots,k\}$, consider the congestion window averaged over all TCP flows $i \in s_e$, denoted as $w_e(t).$ The decrease in $w_e(t)$ is governed by
    \begin{align}
        w_e(t+1) = (1-\beta_e)w_e(t), \label{eq:avg_cwnd_lossevent}
    \end{align}
    where the parameter $\beta_e$ is a weighted average of the decrease parameters of all the individual TCP sources $i \in s_e$, and is given by 
    \begin{align}
        \beta_e = \displaystyle{\sum_{i \in s_e}}\beta_i \mathrm{w}_i(t)\Big(\displaystyle{\sum_{i \in s_e}} \mathrm{w}_i(t)\Big)^{-1}.\label{eq:beta_lossevent}
    \end{align}
\end{proposition}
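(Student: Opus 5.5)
The plan is a direct computation from the individual update rule \eqref{eq:window_update_individual}, applied to every source in $s_e$ and then averaged. Let $n_e = |s_e|$ and define $w_e(t) = \frac{1}{n_e}\sum_{i\in s_e}\mathrm{w}_i(t)$. By the definition of a loss event, each $i \in s_e$ reduces its window during event $e$. Interpreting $t$ and $t+1$ as the instants just before and just after the event, \eqref{eq:window_update_individual} then gives $\mathrm{w}_i(t+1) = (1-\beta_i)\mathrm{w}_i(t)$ for every $i\in s_e$.

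The main step is to sum these updates over $i \in s_e$ and divide by $n_e$:
\begin{align*}
w_e(t+1) = \frac{1}{n_e}\sum_{i\in s_e}(1-\beta_i)\mathrm{w}_i(t) = w_e(t) - \frac{1}{n_e}\sum_{i\in s_e}\beta_i\mathrm{w}_i(t).
\end{align*}
Next, factor out $w_e(t)$. This is valid provided $\sum_{i\in s_e}\mathrm{w}_i(t) > 0$, which holds because every congestion window is at least one packet. The subtracted term becomes $w_e(t)\cdot \sum_{i\in s_e}\beta_i \mathrm{w}_i(t)\big(\sum_{i\in s_e}\mathrm{w}_i(t)\big)^{-1}$, and the factor $1/n_e$ cancels. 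This gives exactly \eqref{eq:avg_cwnd_lossevent}, with $\beta_e$ as in \eqref{eq:beta_lossevent}.

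It remains to show that $\beta_e$ is a genuine weighted average. The weights $\mathrm{w}_i(t)/\sum_{j\in s_e}\mathrm{w}_j(t)$ are nonnegative and sum to one, so $\beta_e$ is a convex combination of the $\beta_i$. In particular $\beta_e \in [\min_{i\in s_e}\beta_i, \max_{i\in s_e}\beta_i] \subset (0,1)$, so \eqref{eq:avg_cwnd_lossevent} is a legitimate multiplicative decrease of the form \eqref{eq:avg_cwnd_betaestimate}. This also matches the earlier observation that sources with larger windows, typically the more aggressive ones, weigh more heavily.

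The only real obstacle is modelling rather than algebra. We must justify that all sources in $s_e$ can be treated as reducing at a common instant, so that a single pair $(t,t+1)$ serves for all of them. I would handle this by taking the loss event to be the short interval during which the synchronised reductions occur, noting that the congestion-detection discussion already relies on drops being synchronised under congestion. Within that interval, windows of sources in $s_e$ change only through the MD step, because additive increase over a fraction of an RTT is negligible next to a multiplicative drop. With this convention, the identity above is exact.
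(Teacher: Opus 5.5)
Your proposal is correct and follows the paper's proof. The paper likewise averages the individual update rule over $i \in s_e$ and matches the result against $w_e(t+1) = (1-\beta_e)w_e(t)$ to read off $\beta_e$; your extra points (positivity of $\sum_{i\in s_e}\mathrm{w}_i(t)$, the convex-combination bound, and the convention that one pair $(t,t+1)$ serves all synchronised sources in $s_e$) make explicit what the paper leaves implicit without changing the argument.
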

\begin{proof}
    Averaging equation~\eqref{eq:window_update_individual} over all $i \in s_e$ yields an equation for the evolution of the congestion window $w_{e}(t)$, equating which with equation~\eqref{eq:avg_cwnd_lossevent} yields the above expression for $\beta_e$. 
\end{proof}
Note that during each loss event there may be TCP sources that do not reduce their congestion window, \emph{i.e.}, for each $e\in\{1,2,\cdots,k\}\, \exists\, i \notin s_e$. Therefore, $\beta_e$ obtained for a particular loss event $e$ cannot be used as an estimate of the parameter that governs the multiplicative decrease in the average congestion window $w(t)$. To model the impact of these TCP sources, one may look at the evolution of the average of $w_e(t)$, across loss events. This is analogous to the time average of the average congestion window $w(t) = (\sum_{i=1}^N w_i(t))/N$ with the set of indices in the summation being non-uniform across time. The TCP fluid model relies on ergodicity of the stochastic process that governs the congestion window evolution~\cite{misra2000fluid,raina2005buffer}; hence we invoke the property of the time average of the process being equal to the ensemble average. The following proposition uses this property to derive the required estimate.

\begin{proposition}
At a congested bottleneck link, the multiplicative decrease in the average congestion window $w(t)$ is governed by a parameter $\hat{\beta}$ given by
\begin{align}
    \hat\beta = \displaystyle{\sum_{e=1}^k}\,\displaystyle{\sum_{i\in s_e}} \beta_i \mathrm{w}_i(t)\Big(\displaystyle{\sum_{e=1}^k}\,\displaystyle{\sum_{i\in s_e}} \mathrm{w}_i(t)\Big)^{-1}.\label{eq:beta_estimate_first}
\end{align}
\end{proposition}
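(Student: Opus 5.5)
The plan is to build on Proposition~1 and then take a time average over loss events, which ergodicity lets us identify with the decrease law of the ensemble-averaged window $w(t)$.

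First, I fix a loss event $e$ and apply Proposition~1. The window decrement summed over the sources in $s_e$ is
\begin{align*}
\sum_{i\in s_e}\big(\mathrm{w}_i(t)-\mathrm{w}_i(t+1)\big)=\sum_{i\in s_e}\beta_i\mathrm{w}_i(t)=\beta_e\sum_{i\in s_e}\mathrm{w}_i(t),
\end{align*}
where $\beta_e$ is given by~\eqref{eq:beta_lossevent}. Sources outside $s_e$ contribute nothing to the decrement at event $e$, since their windows are unchanged at that instant.

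Second, I form the time average of these per-event dynamics over the $k$ loss events. Define the aggregate pre-loss mass $W=\sum_{e=1}^k\sum_{i\in s_e}\mathrm{w}_i(t)$ and the aggregate decrement $D=\sum_{e=1}^k\sum_{i\in s_e}\beta_i\mathrm{w}_i(t)$. The natural representative parameter is the one that reproduces the total decrement relative to the total window mass involved in losses, namely $\hat\beta=D/W$. This is the weighted mean
\begin{align*}
\hat\beta=\sum_{e=1}^k\beta_e\,\frac{\sum_{i\in s_e}\mathrm{w}_i(t)}{W},
\end{align*}
which collapses to~\eqref{eq:beta_estimate_first}. Equivalently, $\hat\beta$ is obtained by summing~\eqref{eq:avg_cwnd_lossevent} over $e$ with weights $|s_e|$ and equating the result with a single relation $\sum_e |s_e|w_e(t+1)=(1-\hat\beta)\sum_e|s_e|w_e(t)$. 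This is the same "average the update, then match coefficients" argument used in Proposition~1, now applied across events instead of across sources.

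Third, I invoke the ergodicity assumption of the fluid model~\cite{misra2000fluid}. The time average over loss events of the per-event averaged windows equals the ensemble average $w(t)$. The multiplicative law $w(t+1)=(1-\hat\beta)w(t)$ with the $\hat\beta$ above therefore plays the role of~\eqref{eq:avg_cwnd_betaestimate} for the average congestion window.

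The main obstacle is justifying this last identification. The pooled ratio $D/W$ is an average over events of the fractional reduction, weighted by the participating window mass. It is not literally the per-loss factor of the full $N$-flow average, because sources not in $s_e$ dilute each event. I would handle this by arguing that, under ergodicity and i.i.d.\ sources, each source participates in loss events with the frequency implicit in the fluid model. The dilution is then already encoded in the packet-drop rate $p(t-\tau)$ in~\eqref{eq:gen_TCP}, and not in $\beta$. Hence the parameter multiplying $w$ in the decrease term $d(w)=\beta w$ must be the participation-weighted ratio $D/W$. I would state this step as a modelling identification rather than a rigorous limit.
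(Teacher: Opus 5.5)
Your route reaches the stated formula by a different weighting across events than the paper uses, and as written your second and third steps do not quite connect. The paper takes the unweighted time average $w(t)=\frac{1}{k}\sum_{e=1}^k w_e(t)$ as the ergodic stand-in for the ensemble average. It sets $\hat\beta=1-w(t+1)/w(t)$ and obtains the intermediate expression $\hat\beta=\sum_e\beta_e w_e(t)/\sum_e w_e(t)$, which weights $\beta_e$ by the event mean $w_e(t)$. Since $\beta_e w_e(t)=|s_e|^{-1}\sum_{i\in s_e}\beta_i\mathrm{w}_i(t)$ and $w_e(t)=|s_e|^{-1}\sum_{i\in s_e}\mathrm{w}_i(t)$, this becomes the stated formula only after the factors $|s_e|^{-1}$ cancel. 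The paper secures that cancellation with the synchronisation approximation $|s_e|\approx N$ for all $e$.

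You instead weight $\beta_e$ by the event's total participating mass $|s_e|w_e(t)$. Your identity $\sum_e|s_e|w_e(t+1)=(1-\hat\beta)\sum_e|s_e|w_e(t)$ with $\hat\beta=D/W$ is then exact for any participation sizes. So you need no synchronisation step to reach the formula, and your reading of $\hat\beta$ as total decrement over total participating window mass is cleaner than anything the paper states.

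The cost shows up in your third step. There you invoke ergodicity to identify the time average over events of the per-event averaged windows, $\frac{1}{k}\sum_e w_e(t)$, with $w(t)$. But the multiplicative law with $\hat\beta=D/W$ holds for the participation-weighted average $\sum_e|s_e|w_e(t)/\sum_e|s_e|$, not for the unweighted one. Applied to the unweighted average, your argument gives the paper's intermediate expression instead, so the ``therefore'' fails whenever the $|s_e|$ differ across events.

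You can close this in one of two ways. Either state the ergodic identification directly for the pooled average of all flow windows sampled at their own loss instants, and own that as your modelling assumption. Or assume $|s_e|$ is approximately constant across events, which is exactly the paper's synchronisation step.

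Your argument that non-participants are absorbed into the drop probability $p(t-\tau)$ addresses a different issue: dilution of the full $N$-flow average. It does not resolve the weighting mismatch above. The paper handles non-participants differently, by time-averaging $w_e$ across events and assuming they are few ($N-|s_e|\ll N$), not through $p$.
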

\begin{proof}
    When the link is congested, the average congestion window can be approximated as 
    \begin{align}
        w(t) = \frac{1}{k}\displaystyle{\sum_{e=1}^k} w_e(t).\label{eq:ensembleavg_cwnd}
    \end{align}
    From equation~\eqref{eq:avg_cwnd_betaestimate}, the required estimate $\hat\beta$ is given by 
    \begin{align*}
        \hat\beta = 1 - \frac{w(t+1)}{w(t)},
    \end{align*}
    which upon substituting~\eqref{eq:ensembleavg_cwnd} and~\eqref{eq:avg_cwnd_lossevent} yields
    \begin{align}
        \hat\beta = \displaystyle{\sum_{e=1}^k}\, \beta_e w_e(t) \Big(\displaystyle{\sum_{e=1}^k}w_e(t)\Big)^{-1}.\label{eq:beta_estimate_intermediate}
    \end{align}
    As most of the TCP flows would be synchronised due to congestion, one may assume that $N-|s_e| << N\,\,\,\forall\,\, e \in \{1,2,\cdots,k\}$, where $|\cdot|$ represents set cardinality. This leads to the approximation $|s_e| \approx N\,\,\forall\,\, e \in \{1,2,\cdots,k\}$. Now, substituting~\eqref{eq:avg_cwnd_lossevent} and~\eqref{eq:beta_lossevent} in equation~\eqref{eq:beta_estimate_intermediate} yields equation~\eqref{eq:beta_estimate_first}. 
\end{proof}
Observe that the estimate $\hat{\beta}$ given by equation~\eqref{eq:beta_estimate_first} is a weighted average of the multiplicative decrease parameter of the individual TCP sources, \emph{i.e.}, $\beta_i$. Here, each weight is dependent on the summation of the individual congestion window across all loss events. For a congested link, the estimate given by equation~\eqref{eq:beta_estimate_first} can be replaced with a computationally lighter one, as outlined in the proposition below. 
\begin{proposition}
    The estimate $\hat\beta$ given by equation~\eqref{eq:beta_estimate_first} can be replaced by 
    \begin{align}
        \hat{\beta} = \displaystyle{\sum_{i=1}^N}\beta_i\, m_i\Big(\displaystyle{\sum_{i=1}^N m_i}\Big)^{-1},\label{eq:beta_estimate_light}
    \end{align}
    where $m_i$ is the number of times a TCP source $i$ infers packet loss and reduces its congestion window.
\end{proposition}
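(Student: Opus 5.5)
We start from the double sum in equation~\eqref{eq:beta_estimate_first} and exchange the order of summation, so that we sum over sources rather than over loss events. Let $\mathbb{1}_{\{i\in s_e\}}$ indicate that source $i$ reduced its window in loss event $e$. Then both the numerator and the denominator of~\eqref{eq:beta_estimate_first} can be written as $\sum_{i=1}^N \sum_{e=1}^k \mathbb{1}_{\{i\in s_e\}}(\cdot)$. Since $\beta_i$ does not depend on $e$, the numerator becomes $\sum_{i=1}^N \beta_i \sum_{e:\, i\in s_e} \mathrm{w}_i(t_e)$, where $t_e$ is the instant of the $e$-th loss event. The denominator becomes $\sum_{i=1}^N \sum_{e:\, i\in s_e}\mathrm{w}_i(t_e)$. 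By definition, the number of terms in the inner sum for source $i$ is exactly $m_i$, the number of loss events at which source $i$ reduced its window.

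The key step is to replace each inner sum by $m_i$ times a common window value. We would argue this using the same ergodicity and congestion assumptions invoked in the proof of the previous proposition. At a congested bottleneck, the flows are synchronised and the system sits near the fixed point of~\eqref{eq:gen_TCP}. The window of a source just before a loss event is therefore approximately its steady-state pre-loss value. This gives $\sum_{e:\, i\in s_e}\mathrm{w}_i(t_e) \approx m_i \bar{\mathrm{w}}_i$, where $\bar{\mathrm{w}}_i$ is the time-averaged (equivalently, ensemble-averaged) window of source $i$ at loss instants.

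We then need to remove the dependence on $\bar{\mathrm{w}}_i$. For this we would use the fluid-model view that all sources share the bottleneck and see the same drop probability $p$ and the same RTT $\tau$. Under the paper's standing assumption that all hosts run the same AIMD algorithm, with the windows of the $N$ flows i.i.d. and averaging to $w(t)$, we get $\bar{\mathrm{w}}_i\approx w^\ast$ for every $i$. The common factor $w^\ast$ then cancels between numerator and denominator, which yields~\eqref{eq:beta_estimate_light}.

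The main obstacle is this last approximation, since the paper itself motivates weighting by the observation that sources with different increase functions have different window sizes. We would address it by noting that the disparity is already accounted for in the counts $m_i$. A source with a larger window sends more packets per RTT, so under a common drop probability it experiences proportionally more losses. Its larger contribution to the decrease in $w(t)$ is thus reflected in $m_i$ rather than in $\bar{\mathrm{w}}_i$. To make this precise we would try to show that $m_i\bar{\mathrm{w}}_i/\sum_j m_j\bar{\mathrm{w}}_j$ and $m_i/\sum_j m_j$ agree to leading order whenever the spread of the $\bar{\mathrm{w}}_i$ is small relative to $w^\ast$. This would be presented as an approximation valid at a congested link, in the same spirit as the $|s_e|\approx N$ step used in the previous proposition.
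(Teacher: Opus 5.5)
Your skeleton is the paper's: rewrite the double sum with the indicator $\mathbf{I}(i,e)$, swap the order of summation, replace each inner sum $\sum_{e=1}^k \mathrm{w}_i(t)\mathbf{I}(i,e)$ by $m_i$ times a flow-independent constant, and cancel that constant. The only nontrivial step is this approximation, and there your justification differs from the paper's. The paper takes the constant to be $\tilde{C}\tau$. At a congested link, a flow loses a packet when its window reaches its share of the bandwidth--delay product. A single loss event may split the bandwidth unequally, but over the $m_i$ loss events of flow $i$ the windows sum to roughly $m_i\tilde{C}\tau$. This argument concerns the window \emph{at loss instants} and rests on the loss mechanism at the bottleneck, not on symmetry between sources.

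Your justification is weaker on two counts. First, deriving $\bar{\mathrm{w}}_i\approx w^\ast$ from i.i.d.\ symmetry covers only statistically identical sources. The weighting in this section is motivated precisely by sources whose increase functions, and hence window sizes, differ. Also, $w^\ast$ is the fixed-point (mean) window, not the loss-instant value that $\bar{\mathrm{w}}_i$ denotes; the paper treats these as different quantities, cf.\ the test $w^\ast\geq\tilde{C}\tau$ in the threshold computation. This does not affect the final formula, since any flow-independent constant cancels.

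Second, and more important, your claim that ``the disparity is absorbed in $m_i$'' does not justify dropping $\bar{\mathrm{w}}_i$. The exact weight of flow $i$ is $m_i\bar{\mathrm{w}}_i$. By your own premise, under a common drop probability $m_i$ grows in proportion to flow $i$'s window. So the exact weight scales roughly like $\bar{\mathrm{w}}_i^{2}$, while the proposed weight $m_i$ scales like $\bar{\mathrm{w}}_i$. The normalised weights agree only when the $\bar{\mathrm{w}}_i$ are nearly equal, which is the assumption itself, not a consequence of it. Your closing fallback (small spread of the $\bar{\mathrm{w}}_i$) is therefore exactly what the proof needs. Drop the ``absorbed in $m_i$'' argument and state the step as the paper does: $\sum_{e=1}^k \mathrm{w}_i(t)\mathbf{I}(i,e)\approx m_i\tilde{C}\tau$, an approximation valid at a congested link.
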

\begin{proof}
   Equation~\eqref{eq:beta_estimate_first} can be rewritten as 
   \begin{align}
       \hat\beta = \displaystyle{\sum_{e=1}^k}\,\displaystyle{\sum_{i=1}^N} \beta_i \mathrm{w}_i(t)\mathbf{I}(i,e)\Big(\displaystyle{\sum_{e=1}^k}\,\displaystyle{\sum_{i=1}^N} \mathrm{w}_i(t)\mathbf{I}(i,e)\Big)^{-1},\label{eq:beta_estimate_indicatorfunction} 
   \end{align}
   where 
   \begin{align*}
       \mathbf{I}(i,e) = \begin{cases}
           1, & \text{if } i \in s_e\\
           0, & \text{otherwise}.
       \end{cases}
   \end{align*}
   Rearranging the summations in equation~\eqref{eq:beta_estimate_indicatorfunction} yields
     \begin{align}
       \hat\beta = \displaystyle{\sum_{i=1}^N}\, \beta_i \,\displaystyle{\sum_{e=1}^k}\mathrm{w}_i(t)\,\mathbf{I}(i,e)\Big(\displaystyle{\sum_{i=1}^N} \displaystyle{\sum_{e=1}^k}\mathrm{w}_i(t)\,\mathbf{I}(i,e)\Big)^{-1}, \label{eq:beta_estimate_indicatorfunction_2}
   \end{align}
   When the link is congested, a TCP source experiences a packet loss when its sending rate reaches its share of the bottleneck bandwidth. In other words, this happens when the congestion window of the flow has reached its share of the bandwidth-delay product. It is possible that in a single loss event, a given flow may get a smaller share of the bandwidth compared to another. This implies that the first flow may have a smaller congestion window than the second. However, when we observe a given TCP flow $i$ over a period of time and note that it experiences $m_i$ loss events, the summation of the congestion window across all loss events would be approximately equal to $m_i \tilde{C}\tau.$ To that end, we may use the approximation $\sum_{e=1}^k\mathrm{w}_i(t)\,\mathbf{I}(i,e) \approx m_i \tilde{C}\tau.$ Applying this approximation to equation~\eqref{eq:beta_estimate_indicatorfunction_2} yields the revised estimate given by equation~\eqref{eq:beta_estimate_light}.
\end{proof}

    The estimate given by equation~\eqref{eq:beta_estimate_first} requires the SDN controller to maintain records of the congestion window size of all the flows, along with the information of when each flow inferred a packet loss and reduced the congestion window.  Meanwhile, the estimate given by equation~\eqref{eq:beta_estimate_light} requires the router to track only the number of times that a TCP flow reduced the size of its congestion window. The latter appears to require less memory and less computation time. However, either of the two equations,~\eqref{eq:beta_estimate_first} or~\eqref{eq:beta_estimate_light}, may be used to compute the estimate $\hat{\beta}$.

\subsection{Computation of packet-dropping threshold}
With the estimate $\hat{\beta}$ at hand, we may now compute the packet-dropping threshold $q_{th}$ using inequality~\eqref{eq:sufficient_condition}. Note that this condition also requires $w^\ast$, $\tilde{C}$ and $\tau$. Among these, $\tilde{C}$ is the per-flow capacity of the output link of the packet-forwarding device, which is known to the SDN controller, and the average RTT of the flows $\tau$ can be observed by the SDN controller using ack-clocking. Suppose that post congestion detection, the estimate $\hat{\beta}$ is obtained at a time instant $t$, the steady-state average congestion window $w^\ast$ is calculated from the estimated congestion window traces $\mathrm{w}_i(\cdot)$ over a duration $\mathcal{T}$ as 
\begin{align*}
    w^\ast = \frac{1}{\mathcal{T} N}\sum_{l=t-\mathcal{T}}^t \sum_{i=1}^N \mathrm{w}_i(l),
\end{align*}
where $\mathcal{T}$ is the time duration parameter used in Section~\ref{sec:CongestionDetection}.

Given that $q_{th}$ is always a positive integer and that it is always less than the buffer size of the router $B$; the required value of $q_{th}$ can be obtained by brute force in linear time. We begin by computing the function on the left-hand side of inequality~\eqref{eq:sufficient_condition} using an initial value of $q_{th}$, taken as 10; and then recompute this function by incrementing $q_{th}$ until inequality~\eqref{eq:sufficient_condition} is satisfied. These details, along with some checks, are presented in Algorithm~\ref{alg:cap}. 

The algorithm uses one parameter $\kappa$ which serves as an upper bound for the ratio of the average congestion window to the share of bandwidth-delay product of each flow. The physical interpretation of the parameter $\kappa$ emerges from the congestion-control behaviour of AIMD-based TCP itself. Under such congestion control, the congestion window increases additively until a loss is detected, and is reduced by a multiplicative factor thereafter. It is well known that the peak value of the congestion window of a source would ideally be equal to its share of the bandwidth-delay product. However, when packet-forwarding devices use large buffers, often equal to multiples of the bandwidth-delay product as seen in current devices, packets are not dropped adequately; and the congestion window of an end host increases beyond its share of the bandwidth-delay product leading to excessive buffering. This can be checked by the condition $w^\ast \geq \tilde{C}\tau$. When an AQM is to be deployed in such a scenario, the packet-dropping threshold must account for such excessive buffering and drop packets in a manner that it serves as adequate feedback for the end hosts to reduce their congestion window. Therefore, when excessive buffering has occurred we use the parameter $\kappa < 1$ in place of the ratio $w^\ast/(\tilde{C}\tau)$, which is $> 1$, in the function on the left-hand side of inequality~\eqref{eq:sufficient_condition}. This leads to a threshold $q_{th}$ which would drop packets aggressively and bring down the $w^\ast$ appropriately so as to reduce queueing delay. Here, we choose $\kappa =0.9$ heuristically. In general, small values of $\kappa$ would further limit the congestion window of the end hosts, which may in turn affect the throughput observed by the host. 


\begin{algorithm}
\caption{Computation of packet-dropping threshold}\label{alg:cap}
\begin{algorithmic}
\State \textbf{Input:} $\hat{\beta}, w^\ast, \tilde{C}, \tau, B$ 
\State $q_{th} \gets 10$
\State $\kappa \gets 0.9$\\
 \Comment Initialization of $f$
\If{$w^\ast \geq \tilde{C}\tau$} \Comment Checks for excessive buffering
\State $f \gets q_{th}\,\,\hat{\beta}\,\,  \tilde{C}\tau \,\,\displaystyle{\kappa^{q_{th}+1}}$
\Else
\State $f \gets q_{th}\,\,\hat{\beta}\,\,  w^\ast \,\,\displaystyle{\Big(\frac{w^\ast}{\tilde{C}\tau}\Big)^{q_{th}}}$
\EndIf

\vspace{2mm}
\While{$f \geq \pi/2\hspace{2mm} \& \hspace{2mm} q_{th} < B$} \Comment Runs when~\eqref{eq:sufficient_condition} is not true
    \State $q_{th} \gets q_{th} + 1$
    \If{$w^\ast \geq \tilde{C}\tau$}  \Comment Checks for excessive buffering
\State $f \gets q_{th}\,\,\hat{\beta}\,\,  \tilde{C}\tau \,\,\displaystyle{\kappa^{q_{th}+1}}$
\Else
\State $f \gets q_{th}\,\,\hat{\beta}\,\,  w^\ast \,\,\displaystyle{\Big(\frac{w^\ast}{\tilde{C}\tau}\Big)^{q_{th}}}$
\EndIf
\EndWhile \\
\Return $q_{th}$ \Comment Outputs $q_{th}$ and exits
\end{algorithmic}
\end{algorithm}

\section{Performance Evaluation}
\label{sec:PerformanceEvaluation}

We now conduct packet-level simulations on ns-3~\cite{riley2010ns}, version 3.36.1, to evaluate the performance of the threshold-based queue policy, with the threshold $q_{th}$ tuned using the mechanism proposed in Section~\ref{sec:ProposedMechanism}. We consider two network topologies and two traffic compositions in each topology. Simulations are designed with appropriately chosen factors and the performance is benchmarked against two queue policies implemented in today's switches. These details are presented in the following subsections. The entire codebase is available at~\cite{SomitGitHub}.

\subsection{Network topology}
We consider two topologies: the single-bottleneck dumbbell topology, which is a generic setup for AQM evaluation~\cite{rfc7928}, and the more-evolved parking-lot topology, which is used in many congestion-control studies~\cite{katabi2002congestion,wan2025acc,fathalli2025network}. The single-bottleneck topology is consistent with the assumptions of the fluid model~\eqref{eq:avg_cwnd_betaestimate}, which forms the mathematical basis of the proposed mechanism, and is therefore well suited for evaluating its performance. In contrast, the parking-lot topology violates these assumptions and allows us to assess the proposed mechanism in a setting where the fluid model may not apply.

 \emph{Single-bottleneck dumbbell}: We consider $N = 60$ pairs of sending and receiving hosts, connected through a single-bottleneck link. The link has a capacity of $100$ Mbps, and is served by two routers $R1, R2$ on either side. Although we use the term ``router'' for the nodes at either ends of the link, these nodes do not perform any routing, they simply forward packets on the output port leading to a single available path for each TCP flow. These nodes may represent any packet-forwarding device. Each host is connected to the respective router through a dedicated access link of $2$ Mbps capacity. When the access links are fully utilized, the incoming traffic at the router on the sending end is 20\% more than the service capacity of the link, creating congestion.    
 The segment size is fixed as $1500$ bytes. The buffer size $B$ at the router is fixed as the bandwidth-delay product. For the given bandwidth and a round-trip time of $250$ ms, which is generally used for buffer sizing~\cite{raina2005buffer}, we obtain the buffer size as $2084$ pkts.  

 \emph{Parking-lot}: This topology has two bottleneck links of $100$ Mbps capacity in series, with three routers $R1, R2$ and $R3$ serving them. We consider $90$ pairs of sending and receiving hosts. Of the $90$ sending hosts, 60 are connected to $R1$ and $30$ are connected to $R2$. Similarly, of the 90 receiving hosts, 60 are connected to $R3$ and 30 are connected to $R2$. Out of the $60$ hosts connected to $R1$, $30$ hosts send traffic targetted for $30$ hosts connected to $R2$, and the remaining send traffic to $30$ hosts connected to $R3$. Further, the $30$ sending hosts connected to $R2$ send traffic to $30$ of the $60$ receiving hosts connected to $R3$. Consequently, the first bottleneck link is shared by $60$ sending hosts, and traffic pertaining to $30$ of them gets diverted at $R2$, while $30$ new hosts are added, along the network path, at $R2$. Each bottleneck link is thus shared by $60$ sender-receiver pairs. All hosts are connected to their respective routers through a dedicated 2 Mbps access link. A schematic representation of the topology is presented in Fig.~\ref{fig:parking_lot_topology}.

\usetikzlibrary{arrows.meta, decorations.markings, positioning}

\begin{figure}[tbh!]
\centering
\begin{subfigure}{\linewidth}
\centering
\begin{tikzpicture}[
    router/.style={circle, draw=blue!80!black, fill=blue!10, thick, inner sep=0pt, minimum size=1.0cm, font=\bfseries},
    mininode/.style={circle, draw=green!60!black, fill=green!5, minimum size=0.35cm, inner sep=0pt, font=\scriptsize\sffamily},
    dotsnode/.style={circle, minimum size=0.35cm, inner sep=0pt, font=\small\bfseries},
    directed link/.style={
        thick, 
        draw=black!70,
        postaction={decorate, decoration={
            markings,
            mark=at position 0.55 with {\arrow[black!80]{Latex[scale=0.9]}}
        }}
    },
    host link/.style={
        thin,
        draw=black!50,
        postaction={decorate, decoration={
            markings,
            mark=at position 0.6 with {\arrow[black!70]{Latex[scale=0.7]}}
        }}
    },
    bottleneck/.style={midway, above, font= \small\sffamily\bfseries, color=red!80!black},
    group label/.style={font=\small\bfseries, color=black!90},
    titlerect1/.style={draw,inner ysep=0.5ex,inner xsep=0.5ex,  label={[fill=white,yshift=-0.65cm,xshift=-0.2cm]above:#1}},
    titlerect3/.style={draw,inner ysep=0.5ex,inner xsep=0.5ex,  label={[fill=white,yshift=-0.65cm,xshift=0.4cm]above:#1}},
]

    \node[router] (R1) at (0, 0) {$R1$};
    \node[router] (R3) at (3, 0) {$R2$};


    \node[mininode] (s12-1)  at (-1.5,  1.5) {1};
    \node[mininode] (s12-2)  at (-1.5, 0.75) {2};
    \node[dotsnode] (s12-d1)  at (-1.5,  0) {$\vdots$};
    \node[mininode] (s12-60) at (-1.5, -1.5) {60};
    \node[titlerect1={\small{$S$}}, draw,inner ysep=2pt, inner xsep=4pt, dash dot,fit={(s12-1)  (s12-2) (s12-d1) (s12-60)},color=darkgreen,very thick] (S) {};
    
    \foreach \h in {1,2,60} {
        \draw[host link] (s12-\h) -- (R1);
    }

    \node[mininode] (d23-1)  at (4.5,  1.5) {1};
    \node[mininode] (d23-2)  at (4.5,  0.75) {2};
    \node[dotsnode] (d23-d1)  at (4.5,  0) {$\vdots$};
    \node[mininode] (d23-60) at (4.5, -1.5) {60};
    \node[titlerect3={\small{$D$}}, draw,inner ysep=2pt, inner xsep=4pt, dash dot,fit={(d23-1)(d23-2) (d23-d1) (d23-60)},color=darkgreen,very thick] (D) {};
    
    \foreach \h in {1,2,60} {
        \draw[host link] (R3) -- (d23-\h);
    }

    \draw[directed link] (R1) -- (R3) node[bottleneck] {BL};

\end{tikzpicture}
\caption{\textbf{Single-bottleneck dumbbell.} Traffic flows between source-destination pair across the bottleneck link.}
\end{subfigure}
\begin{subfigure}{\linewidth}
\centering
\begin{tikzpicture}[
    router/.style={circle, draw=blue!80!black, fill=blue!10, thick, inner sep=0pt, minimum size=1.0cm, font=\bfseries},
    mininode/.style={circle, draw=green!60!black, fill=green!5, minimum size=0.35cm, inner sep=0pt, font=\scriptsize\sffamily},
    dotsnode/.style={circle, minimum size=0.35cm, inner sep=0pt, font=\small\bfseries},
    directed link/.style={
        thick, 
        draw=black!70,
        postaction={decorate, decoration={
            markings,
            mark=at position 0.55 with {\arrow[black!80]{Latex[scale=0.9]}}
        }}
    },
    host link/.style={
        thin,
        draw=black!50,
        postaction={decorate, decoration={
            markings,
            mark=at position 0.6 with {\arrow[black!70]{Latex[scale=0.7]}}
        }}
    },
    bottleneck/.style={midway, above, font= \small\sffamily\bfseries, color=red!80!black},
    group label/.style={font=\small\bfseries, color=black!90},
    titlerect1/.style={draw,inner ysep=0.5ex,inner xsep=0.5ex, label={[fill=white,yshift=-0.65cm,xshift=-0.1cm]above:#1}},
    titlerect2/.style={draw,inner ysep=0.5ex,inner xsep=0.5ex, label={[fill=white,xshift=0.3cm]above:#1}},
    titlerect3/.style={draw,inner ysep=0.5ex,inner xsep=0.5ex, label={[fill=white,yshift=-0.65cm,xshift=0.4cm]above:#1}},
    titlerect4/.style={draw,inner ysep=0.5ex,inner xsep=0.5ex, label={[fill=white,xshift=0.35cm]above:#1}}
]

    \node[router] (R1) at (0, 0) {$R1$};
    \node[router] (R2) at (2, 0) {$R2$};
    \node[router] (R3) at (4, 0) {$R3$};

    
    \node[mininode] (s12-1)  at (-1.5,  1.5) {1};
    \node[mininode] (s12-2)  at (-1.5, 0.75) {2};
    \node[dotsnode] (s12-d1)  at (-1.5,  0.45) {$\vdots$};
    \node[mininode] (s12-3)  at (-1.5,  0) {30};
    \node[titlerect1={\small{$S1$}}, draw,inner ysep=2pt, inner xsep=4pt, dash dot,fit={(s12-1) (s12-2) (s12-d1) (s12-3)},color=magenta,very thick] (S1) {};
    \node[dotsnode] (s12-d2)  at (-1.5,  -0.75) {$\vdots$};
    \node[mininode] (s12-59) at (-1.5, -1.5) {59};
    \node[mininode] (s12-60) at (-1.5, -2.25) {60};
    \node[titlerect1={\small{$S2$}}, draw,inner ysep=2pt, inner xsep=4pt, dash dot,fit={(s12-d2) (s12-59) (s12-60)},color=brown,very thick] (S2) {};
    
    \foreach \h in {1,2,3,59,60} {
        \draw[host link] (s12-\h) -- (R1);
    }

    \node[mininode] (s3-1)  at (0.5, 2) {1};
    \node[mininode] (s3-2)  at (1.1, 2) {2};
    \node[mininode] (s3-3)  at (1.7, 2) {3};
    \node[dotsnode] (s3-d)  at (2.3, 2) {$\dots$};
    \node[mininode] (s3-29) at (2.9, 2) {29};
    \node[mininode] (s3-30) at (3.5, 2) {30};
    \node[titlerect2={\small{$S3$}}, draw,inner ysep=3pt, inner xsep=4pt, dash dot,fit={(s3-1) (s3-2) (s3-3) (s3-d) (s3-29) (s3-30)},color=darkgreen,very thick] (S3) {};
    
    \foreach \h in {1,2,3,29,30} {
        \draw[host link] (s3-\h) -- (R2);
    }

    \node[mininode] (d1-1)  at (0.5, -2) {1};
    \node[mininode] (d1-2)  at (1.1, -2) {2};
    \node[mininode] (d1-3)  at (1.7, -2) {3};
    \node[dotsnode] (d1-d)  at (2.3, -2) {$\dots$};
    \node[mininode] (d1-29) at (2.9, -2) {29};
    \node[mininode] (d1-30) at (3.5, -2) {30};
    \node[titlerect4={\small{$D1$}}, draw,inner ysep=3pt, inner xsep=4pt, dash dot,fit={(d1-1) (d1-2) (d1-3) (d1-d) (d1-29) (d1-30)},color=magenta, very thick] (D1) {};
    
    \foreach \h in {1,2,3,29,30} {
        \draw[host link] (R2) -- (d1-\h);
    }

    \node[mininode] (d23-1)  at (5.5,  1.5) {1};
    \node[mininode] (d23-2)  at (5.5,  0.75) {2};
    \node[dotsnode] (d23-d1)  at (5.5,  0.45) {$\vdots$};
    \node[mininode] (d23-3)  at (5.5, 0) {30};
    \node[titlerect3={\small{$D2$}}, draw,inner ysep=2pt, inner xsep=4pt, dash dot,fit={(d23-1)(d23-2)(d23-d1) (d23-3)},color=brown, very thick] (D2) {};
    \node[dotsnode] (d23-d2)  at (5.5,  -0.75) {$\vdots$};
    \node[mininode] (d23-59) at (5.5,  -1.5) {59};
    \node[mininode] (d23-60) at (5.5, -2.25) {60};
    \node[titlerect3={\small{$D3$}}, draw,inner ysep=2pt, inner xsep=4pt, dash dot,fit={(d23-d2)(d23-59)(d23-60)},color=darkgreen,very thick] (D3) {};
    
    \foreach \h in {1,2,3,59,60} {
        \draw[host link] (R3) -- (d23-\h);
    }

    \draw[directed link] (R1) -- (R2) node[bottleneck] {BL 1};
    \draw[directed link] (R2) -- (R3) node[bottleneck] {BL 2};

\end{tikzpicture}
\caption{\textbf{Parking-lot.} Traffic flows from $S1$ to $D1$, $S2$ to $D2$ and from $S3$ to $D3$, each set has 30 hosts. Each of the bottleneck links, BL1 and BL2, carries traffic between $60$ sending-receiving host pairs.}
\end{subfigure}
    \caption{Schematic representation of the two topologies used in simulation. Each access link has a capacity of 2 Mbps, and each bottleneck link is of 100 Mbps.}
    \label{fig:parking_lot_topology}
\end{figure}

\subsection{Traffic composition}
For each topology, we considered two traffic compositions. Similar to the choice of topologies, we choose one traffic composition that aligns with the assumptions of the fluid model, and one that serves as a systematic deviation from these assumptions and is hence closer to the real-world scenario. 
 \begin{figure}[t]
	\centering
	\begin{tikzpicture}
		\begin{axis}[
		yticklabel style={
			/pgf/number format/fixed,
			/pgf/number format/precision=2,
			/pgf/number format/fixed zerofill,
			/pgf/number format/fixed relative,
		},
		every axis plot/.append style={thick},
        every tick label/.append style={font=\normalsize},
        every axis label/.append style={font=\normalsize},
        every axis y label/.append style={yshift=-3pt},
		ylabel = {Parameter estimate $\hat\beta$},
		xlabel = {RTT (ms)},
		xmin = 150, xmax = 300,
		ymin = 0, ymax = 1, 
		grid = both,
		minor tick num = 1,
		major grid style = {lightgray},
		minor grid style = {lightgray!25},
        width=0.5\linewidth,
        height=2in,
	      legend cell align = {right},
		legend pos = south east,
        legend style={draw=none,font=\small}
		]
         \addplot[blue,dash dot,mark=*]  table [col sep=comma,x = T, y = BA] {Data/ba_r0.csv};
         \addplot[darkgreen,solid,mark=square*] table[col sep=comma, x=T, y=BP]  {Data/bp_r0.csv};
		\end{axis}
		\end{tikzpicture}
	    	\begin{tikzpicture}
		\begin{axis}[
		yticklabel style={
			/pgf/number format/fixed,
			/pgf/number format/precision=2,
			/pgf/number format/fixed zerofill,
			/pgf/number format/fixed relative,
		},
		every axis plot/.append style={thick},
        every tick label/.append style={font=\normalsize},
        every axis label/.append style={font=\normalsize},
        every axis y label/.append style={yshift=-3pt},
		ylabel = {Computed threshold $q_{th}$ (pkts)},
		xlabel = {RTT (ms)},
		xmin = 150, xmax = 300,
		  ymin = 10, ymax = 50,
		grid = both,
		minor tick num = 1,
		major grid style = {lightgray},
		minor grid style = {lightgray!25},
        width=0.5\linewidth,
        height=2in,
		legend cell align = {right},
		legend pos = south east
		]
         \addplot[blue,dash dot,mark=*]  table [col sep=comma,x = T, y = QA] {Data/qa_r0.csv};
         \addplot[darkgreen,solid,mark=square*] table[col sep=comma, x=T, y=QP]  {Data/qp_r0.csv};
		\end{axis}
		\end{tikzpicture}
        \begin{tikzpicture}
        \begin{axis}[
            hide axis,
            xmin=0, xmax=1,
            ymin=0, ymax=1,
            legend columns=1,
            legend style={
                draw=none,
                font=\small
            }
        ]
        \addlegendimage{blue,dash dot,mark=*}
        \addlegendentry{obtained using actual \texttt{cwnd}}
        \addlegendimage{darkgreen,solid,mark=square*}
        \addlegendentry{obtained using predicted \texttt{cwnd}}
        \end{axis}
        \end{tikzpicture}
        \caption{A comparison of the estimated parameter $\hat\beta$ and the computed packet-dropping threshold $q_{th}$ when using the congestion window predicted by the Random Forest regressor with those obtained using the actual congestion windows maintained by the source hosts. The common legend is placed below the plots. Observe that the two curves for $q_{th}$ remain close throughout the range of RTT considered.}
        \label{fig:RFR_efficiency}
\end{figure}

\emph{Homogeneous traffic}: All 60 sending nodes run FTP applications, leading to long-lived TCP flows. For long lived TCP flows \texttt{BulkSendApplication} was installed at each sending host, and a \texttt{PacketSink} was installed at each receiving host. The applications are started at the beginning of each simulation at a time instant chosen randomly from $\mathcal{U}[0,5]$ s. Once started, a long-lived flow remains throughout the simulation time. All the packets have a size of 1500 Bytes.

\emph{Heterogeneous traffic}: In this case, $70$\% of the sources run FTP applications over long-lived TCP flows, $20$\% of them run UDP, while $10$\% of the sending hosts run HTTP applications that generate short-lived TCP flows. For long-lived TCP flows \texttt{BulkSendApplication} was used, as before. For UDP flows, the \texttt{OnOffApplication} was installed at each sending host with \texttt{OffTime} attribute set to $0$, and the \texttt{UdpSocketFactory} was used to ensure that the transport-layer protocol used is UDP. The long-lived TCP flows and UDP flows start at a time instant $\sim\mathcal{U}[0,5]$ s. Meanwhile, the short-lived TCP flows start at a time instant chosen from an exponential distribution with a mean  0.1 s, and is bounded above by 1 s. The duration of each short-lived flow is chosen in a similar way. The time gap between consecutive short-lived flows from a given host is chosen randomly from $\mathcal{U}[0.1,10]$ s. The short-lived flows are also configured using the \texttt{BulkSendApplication} with the start and stops configured using random numbers chosen as described above. Appropriate \texttt{PacketSink} applications are used in all the receiving hosts. All the packets have a size of 1500 Bytes.


\begin{figure*}[tbh!]
    \centering
    \begin{subfigure}{\textwidth}
        \begin{tikzpicture}
        \def\ycol{avgThroughput}
        \def\yerrcol{stdThroughput}
        \def\ptitle{Average Throughput vs RTT}
        \def\xlabel{RTT}
        \def\ylabel{Throughput (Mbps)}
        \def\ymin{1.2}
        \def\ymax{2}
        \begin{axis}[
            xlabel={RTT (ms)},
            ylabel={\ylabel},
            ylabel shift=-5pt,
            xmin=145, xmax=305,
            ymin=\ymin, ymax=\ymax,
            xtick={150, 200, 250, 300},
            legend pos=south west,
            grid=both,
            major grid style={dashed,line width=0.5pt,draw=gray!50},
            minor grid style={dotted,line width=0.25pt,draw=gray!25},
            yticklabel style={/pgf/number format/fixed},
            width=2.5in,
            height=2in,
        ]
        
        \addplot[
            color=blue,
            mark=*,
            mark options={solid},
            error bars/.cd,
            y dir=both,
            y explicit,
            error bar style={line width=1pt},
        ] table[x=RTT,
        y=\ycol, 
        y error=\yerrcol,
        col sep=comma,each nth point={2}] {Data/results_LN_aqm_zc_150_to_300_TM_mlBeta.csv};
        
        \addplot[
            color=red,
            mark=square*,
            mark options={solid},
            error bars/.cd,
            y dir=both,
            y explicit,
            error bar style={line width=1pt},
        ] table[x=RTT,
        y=\ycol, 
        y error=\yerrcol,
        col sep=comma,each nth point={2}] {Data/results_LN_naqm_150_to_300_TM_mlBeta.csv};
        
        \addplot[
            color=darkgreen,
            mark=triangle*,
            mark options={solid},
            error bars/.cd,
            y dir=both,
            y explicit,
            error bar style={line width=1pt},
        ] table[x=RTT,
        y=\ycol,
        y error=\yerrcol,
        col sep=comma,each nth point={2}] {Data/results_LN_red_150_to_300_TM_mlBeta.csv};
        
        \end{axis}
        \end{tikzpicture} 
        \begin{tikzpicture}
        \def\ycol{avgGoodput}
        \def\yerrcol{stdGoodput}
        \def\ptitle{Average Goodput vs RTT}
        \def\xlabel{RTT}
        \def\ylabel{Goodput (Mbps)}
        \def\ymin{1.2}
        \def\ymax{2}
        \begin{axis}[
            xlabel={RTT (ms)},
            ylabel={\ylabel},
            ylabel shift=-5pt,
            xmin=145, xmax=305,
            ymin=\ymin, ymax=\ymax,
            xtick={150, 200, 250, 300},
            legend pos=south west,
            grid=both,
            major grid style={dashed,line width=0.5pt,draw=gray!50},
            minor grid style={dotted,line width=0.25pt,draw=gray!25},
            yticklabel style={/pgf/number format/fixed},
            width=2.5in,
            height=2in,
        ]
        
        \addplot[
            color=blue,
            mark=*,
            mark options={solid},
            error bars/.cd,
            y dir=both,
            y explicit,
            error bar style={line width=1pt},
        ] table[x=RTT,
        y=\ycol, 
        y error=\yerrcol,
        col sep=comma,each nth point={2}] {Data/results_LN_aqm_zc_150_to_300_TM_mlBeta.csv};
        
        \addplot[
            color=red,
            mark=square*,
            mark options={solid},
            error bars/.cd,
            y dir=both,
            y explicit,
            error bar style={line width=1pt},
        ] table[x=RTT,
        y=\ycol, 
        y error=\yerrcol,
        col sep=comma,each nth point={2}] {Data/results_LN_naqm_150_to_300_TM_mlBeta.csv};
        
        \addplot[
            color=darkgreen,
            mark=triangle*,
            mark options={solid},
            error bars/.cd,
            y dir=both,
            y explicit,
            error bar style={line width=1pt},
        ] table[x=RTT,
        y=\ycol,
        y error=\yerrcol,
        col sep=comma,each nth point={2}] {Data/results_LN_red_150_to_300_TM_mlBeta.csv};
        \end{axis}
        \end{tikzpicture}
        \begin{tikzpicture}
            \def\ycol{linkUtilization}
            \def\yerrcol{stdLinkUtilization}
            \def\ptitle{Link Utilization (\%) vs RTT}
            \def\xlabel{RTT}
            \def\ylabel{Link Utilization (\%)}
            \def\ymin{40}
            \def\ymax{100}
            \begin{axis}[
                xlabel={RTT (ms)},
                ylabel={\ylabel},
                ylabel shift=-5pt,
                xmin=145, xmax=305,
                ymin=\ymin, ymax=\ymax,
                xtick={150, 200, 250, 300},
                legend pos=south east,
                legend style={draw=none},
                grid=both,
                major grid style={dashed,line width=0.5pt,draw=gray!50},
                minor grid style={dotted,line width=0.25pt,draw=gray!25},
                yticklabel style={/pgf/number format/fixed},
                width=2.5in,
                height=2in,
            ]
            
            \addplot[
                color=blue,
                mark=*,
                mark options={solid},
                error bars/.cd,
                y dir=both,
                y explicit,
                error bar style={line width=1pt},
            ] table[x=RTT,
            y=\ycol, 
            y error=\yerrcol,
            col sep=comma,each nth point={2}] {Data/results_LN_aqm_zc_150_to_300_TM_mlBeta.csv};
            \addlegendentry{Our AQM}
            
            \addplot[
                color=red,
                mark=square*,
                mark options={solid},
                error bars/.cd,
                y dir=both,
                y explicit,
                error bar style={line width=1pt},
            ] table[x=RTT,
            y=\ycol, 
            y error=\yerrcol,
            col sep=comma,each nth point={2}] {Data/results_LN_naqm_150_to_300_TM_mlBeta.csv};
            \addlegendentry{DropTail}
            
            \addplot[
                color=darkgreen,
                mark=triangle*,
                mark options={solid},
                error bars/.cd,
                y dir=both,
                y explicit,
                error bar style={line width=1pt},
            ] table[x=RTT,
            y=\ycol,
            y error=\yerrcol,
            col sep=comma,each nth point={2}] {Data/results_LN_red_150_to_300_TM_mlBeta.csv};
            \addlegendentry{ARED}
            
            \end{axis}
        \end{tikzpicture}
    \caption{Performance in terms higher-is-better metrics: throughput, goodput and link utilisation. }
    \end{subfigure}\\
    \begin{subfigure}{\textwidth}
        \begin{tikzpicture}
         \def\ycol{queuingDelay}
        \def\yerrcol{stdQueuingDelay}
        \def\ptitle{ Queuing Delay vs RTT}
        \def\xlabel{RTT}
        \def\ylabel{Queuing Delay (ms)}
        \def\ymin{0}
        \def\ymax{2}
            \begin{axis}[
                xlabel={RTT (ms)},
                ylabel={\ylabel},
                ylabel shift=-5pt,
                xmin=145, xmax=305,
                ymin=\ymin, ymax=\ymax,
                xtick={150, 200, 250, 300},
                legend pos=north east,
                grid=both,
                major grid style={dashed,line width=0.5pt,draw=gray!50},
                minor grid style={dotted,line width=0.25pt,draw=gray!25},
                yticklabel style={/pgf/number format/fixed},
                width=2.5in,
                height=2in,
            ]
            
            \addplot[
                color=blue,
                mark=*,
                mark options={solid},
                error bars/.cd,
                y dir=both,
                y explicit,
                error bar style={line width=1pt},
            ] table[x=RTT,
            y=\ycol, 
            y error=\yerrcol,
            col sep=comma,each nth point={2}] {Data/results_LN_aqm_zc_150_to_300_TM_mlBeta.csv};
            
            \addplot[
                color=red,
                mark=square*,
                mark options={solid},
                error bars/.cd,
                y dir=both,
                y explicit,
                error bar style={line width=1pt},
            ] table[x=RTT,
            y=\ycol, 
            y error=\yerrcol,
            col sep=comma,each nth point={2}] {Data/results_LN_naqm_150_to_300_TM_mlBeta.csv};
            
            \addplot[
                color=darkgreen,
                mark=triangle*,
                mark options={solid},
                error bars/.cd,
                y dir=both,
                y explicit,
                error bar style={line width=1pt},
            ] table[x=RTT,
            y=\ycol,
            y error=\yerrcol,
            col sep=comma,each nth point={2}] {Data/results_LN_red_150_to_300_TM_mlBeta.csv};
            
            \end{axis}
        \end{tikzpicture}
        \begin{tikzpicture}
        \def\ycol{flowCompTime}
        \def\yerrcol{stdFlowCompTime}
        \def\ptitle{Flow Completion Time vs RTT}
        \def\xlabel{RTT}
        \def\ylabel{Flow Completion Time (s)}
        \def\ymin{400}
        \def\ymax{650}
            \begin{axis}[
                xlabel={RTT (ms)},
                ylabel={\ylabel},
                ylabel shift=-5pt,
                xmin=145, xmax=305,
                ymin=\ymin, ymax=\ymax,
                xtick={150, 200, 250, 300},
                legend pos=south east,
                grid=both,
                major grid style={dashed,line width=0.5pt,draw=gray!50},
                minor grid style={dotted,line width=0.25pt,draw=gray!25},
                yticklabel style={/pgf/number format/fixed},
                width=2.5in,
                height=2in,
            ]
            
            \addplot[
                color=blue,
                mark=*,
                mark options={solid},
                error bars/.cd,
                y dir=both,
                y explicit,
                error bar style={line width=1pt},
            ] table[x=RTT,
            y=\ycol, 
            y error=\yerrcol,
            col sep=comma,each nth point={2}] {Data/results_LN_aqm_zc_150_to_300_TM_mlBeta.csv};
            
            \addplot[
                color=red,
                mark=square*,
                mark options={solid},
                error bars/.cd,
                y dir=both,
                y explicit,
                error bar style={line width=1pt},
            ] table[x=RTT,
            y=\ycol, 
            y error=\yerrcol,
            col sep=comma,each nth point={2}] {Data/results_LN_naqm_150_to_300_TM_mlBeta.csv};
            
            \addplot[
                color=darkgreen,
                mark=triangle*,
                mark options={solid},
                error bars/.cd,
                y dir=both,
                y explicit,
                error bar style={line width=1pt},
            ] table[x=RTT,
            y=\ycol,
            y error=\yerrcol,
            col sep=comma,each nth point={2}] {Data/results_LN_red_150_to_300_TM_mlBeta.csv};
            
            \end{axis}
    \end{tikzpicture}
    \begin{tikzpicture}
        \def\ycol{packetLoss}
        \def\yerrcol{stdPktLoss}
        \def\ptitle{ Packet Loss (\%) vs RTT}
        \def\xlabel{RTT}
        \def\ylabel{Packet Loss (\%)}
        \def\ymin{0}
        \def\ymax{5}
        \begin{axis}[
                xlabel={RTT (ms)},
                ylabel={\ylabel},
                ylabel shift=-5pt,
                xmin=145, xmax=305,
                ymin=\ymin, ymax=\ymax,
                xtick={150, 200, 250, 300},
                legend pos=north east,
                legend style={draw=none},
                grid=both,
                major grid style={dashed,line width=0.5pt,draw=gray!50},
                minor grid style={dotted,line width=0.25pt,draw=gray!25},
                yticklabel style={/pgf/number format/fixed},
                width=2.5in,
                height=2in,
            ]
            
            \addplot[
                color=blue,
                mark=*,
                mark options={solid},
                error bars/.cd,
                y dir=both,
                y explicit,
                error bar style={line width=1pt},
            ] table[x=RTT,
            y=\ycol, 
            y error=\yerrcol,
            col sep=comma,each nth point={2}] {Data/results_LN_aqm_zc_150_to_300_TM_mlBeta.csv};
            
            \addplot[
                color=red,
                mark=square*,
                mark options={solid},
                error bars/.cd,
                y dir=both,
                y explicit,
                error bar style={line width=1pt},
            ] table[x=RTT,
            y=\ycol, 
            y error=\yerrcol,
            col sep=comma,each nth point={2}] {Data/results_LN_naqm_150_to_300_TM_mlBeta.csv};
            
            \addplot[
                color=darkgreen,
                mark=triangle*,
                mark options={solid},
                error bars/.cd,
                y dir=both,
                y explicit,
                error bar style={line width=1pt},
            ] table[x=RTT,
            y=\ycol,
            y error=\yerrcol,
            col sep=comma,each nth point={2}] {Data/results_LN_red_150_to_300_TM_mlBeta.csv};
        
        \end{axis}
    \end{tikzpicture}
    \caption{Performance in terms of lower-is-better metrics: queueing delay, flow completion time and packet loss percentage.}   
    \end{subfigure}
    \caption{\textbf{Single-bottleneck dumbbell, heterogeneous traffic}. Plots showing the performance of the threshold-based queue policy, with the packet-dropping threshold tuned using the proposed mechanism, benchmarked against the DropTail and ARED policies. The legend placed in the top-right plot applies to all the plots.}
    \label{fig:perf_eval_results_singlebottleneck}
\end{figure*}

\subsection{Threshold policy with proposed mechanism}
For queue management, at $R1$ in the single-bottleneck toplogy and $R1$ and $R2$ in the parking-lot topology, we used the threshold-based policy along with the four components of the proposed mechanism: congestion detection, congestion window prediction, parameter estimation, and threshold computation, implemented as separate functions emulating the SDN control server (see Fig.~\ref{fig:schematic_implementation}). For congestion detection, we used Algorithm~\ref{alg:CongestionDetection}. For this, the parameter $m_z$ was set as 5, and the time window $\mathcal{T}$ was fixed as 5~s, a few multiples of the largest RTT used in the simulations. As observed through the ROC plots presented in Section~\ref{sec:CongestionDetection}, Algorithm~\ref{alg:CongestionDetection} detects congestion successfully for these, and regardless of these, values. 
The congestion window estimation runs in parallel and uses the sequence numbers of the packets along with the IP addresses and port numbers of the sender-receiver pairs to estimate the required $\mathrm{w}_i(t)$ waveforms for all the sending hosts. Once congestion is detected, these $\mathrm{w}_i(t)$ waveforms are fed into beta estimation function, which estimates the multiplicative decrease parameter using equation~\eqref{eq:beta_estimate_light}. The resultant estimate $\hat{\beta}$, along with (i) $w^\ast$ obtained from the $\mathrm{w}_i(t)$ waveforms for all hosts, (ii) the average RTT estimated using the sequence numbers, and (iii) the capacity of the outgoing link, is fed to Algorithm~\ref{alg:cap} which returns the packet-dropping threshold $q_{th}$. This threshold is then used to make packet-dropping decisions at the router, thereby deploying the AQM. A schematic representation of the implementation is shown in Fig.~\ref{fig:schematic_PM}.  

\subsection{Benchmarking, variable factors and metrics}
We benchmark the performance of the threshold-based policy with the proposed mechanism against two policies that are implemented in today's routers/switches. The first one is the simple DropTail policy which simply drops packets once the buffer is full. Secondly, we consider the Adaptive Random Early Detection (ARED) which is implemented in most Linux-based packet-forwarding devices. For the ARED implementation, we used the parameters $minQth = 50$ pkts, $maxQth = 100$ pkts which are comparable with the $q_{th}$ computed for the threshold-based policy. Although numerous other AQMs are proposed in the literature, both classical as well as ML-aided, we choose these two for a comparative study given their implementation status. If an AQM is to be deployed in live-networks, it must outperform those that are already implemented and are hence closest to deployment. 

Along with varying the topology and traffic composition, as described above, we also consider two variable factors: the round-trip time, and the TCP variant. Round-trip time is a crucial factor as it is an uncontrollable network parameter; in that, longer paths with large propagation delay could cause large RTTs regardless of network conditions. Given that operating regimes with large RTTs are known to exhibit congestion in networks carrying TCP flows~\cite{raina2005buffer,ghosh2020right}, we vary RTT in the range $[150, 300]$ ms. In order to avoid RTT-induced synchronization among hosts, in each simulation with a fixed RTT of  $\tau \in [150, 300]$ ms, the RTT for each sender-receiver pair is chosen randomly as $T \sim \mathcal{U}[\tau-5,\tau+5]$ ms. To fix the required RTT, each access link is configured with a one-way delay of $T/4$, which results in a mean RTT of $\tau$ across flows. Two AIMD-based TCP variants, namely \texttt{LinuxReno} and \texttt{NewReno} are considered. 

We consider six performance metrics: (i) throughput, (ii) goodput, (iii) link utilisation, (iv) packet-loss percentage, (v) queueing delay, and (vi) flow-completion time. Of these, the first three are higher-is-better metrics, and the remaining are lower-is-better. The choice of metrics is consistent with the recommendations in RFC 7928~\cite{rfc7928} for evaluation of AQMs. 

For recording all metrics, each simulation is run until all clients have transmitted $100$ MB Data. 
Among these metrics, throughput and goodput were recorded at the receiving hosts. Here, throughput considers all packet transfers while goodput measures the rate of transfer of useful packets, \emph{i.e.}, retransmissions do not contribute to goodput. Note that these two are user-oriented metrics. Link utilisation, queueing delay, and packet loss were recorded at the bottleneck link. These serve as system-oriented metrics. 
Flow completion time is defined as the time required to complete the transfer of a file of a fixed size, and is often regarded as the most important user-oriented metric. 
For this, we fix a file size of $100$ MB at the FTP application of each host and run the simulation until the last flow was completed. 

\input{Sections_J/Results_Plots_parkingLot_TM}

\subsection{Results and discussion}

Firstly, to verify the efficiency of the proposed congestion window estimation using the Random Forest regressor, we simulated the two succeeding components, \emph{i.e.} $\beta$ estimation and $q_{th}$ computation, under two settings: (i) using the actual congestion window (\texttt{cwnd}) traces obtained using the \texttt{CongestionWindow} trace source of the socket at each TCP source, (ii) using the congestion window waveform estimated by the Random Forest. Here, the first setting serves as a baseline to compare the second setting with. The parameter estimate $\hat\beta$ and the computed packet-dropping threshold $q_{th}$ obtained for the two cases are plotted in Fig.~\ref{fig:RFR_efficiency}. From these, we observe that using the congestion window estimates produced by the Random Forest, as prescribed by the proposed mechanism, does not cause any significant deviation in the threshold $q_{th}$ across the range of RTT considered. Notably, the curves get closer as the RTT increases, indicating that the window estimation is particularly effective in the large RTT regime which is prone to performance degradation in general. 

Next, we evaluate the performance of the proposed method using the required metrics in the RTT range considered. In each simulation, throughput, goodput and packet loss are recorded at the receiving hosts using \texttt{FlowMonitor}.  Link utilisation and queueing delay are recorded at the router preceding the bottleneck link, through the appropriate calculations on the readouts of the \texttt{PointToPointNetDevice/PhyTx} and the \texttt{RootQueueDiscList/$\ast$/PacketsInQueue} trace source, respectively. 
From the traces obtained for each simulation, the first few seconds of data corresponding to transient behaviour are discarded and mean and standard deviation of each metric are computed from the remaining data.

For flow completion time, the time required for the 100 MB transfer is recorded for each flow, and the mean and standard deviation across flows is computed. The results for the single bottleneck dumbbell topology with the LinuxReno TCP variant used in heterogeneous traffic composition, for the whole range of RTT are presented in Fig.~\ref{fig:perf_eval_results_singlebottleneck}. For the same setting, the results obtained with the parking-lot topology are presented in Fig.~\ref{fig:perf_eval_results_parking-lot}. In all these plots, we used the dot-whisker representation, where the dot denotes the mean and the upper and lower whiskers are used to indicate the range of values that are within one standard deviation from the mean. We skip the results obtained for homogeneous traffic, and TCP NewReno as they qualitatively appear similar to the ones presented. We now discuss our observations.

\emph{User-oriented performance}: In the single bottleneck network, the mean throughput and goodput offered by our AQM implementation is seen to be marginally below that of DropTail, and better than that of ARED. Notably, the throughput and goodput remain constant across RTT in the case of our AQM, and ARED exhibits a reduction in throughput and goodput as RTT increases. In particular, the goodput performance of our AQM is agnostic to the RTT, even in operating regimes with large RTTs. This is due to the fact that the $q_{th}$ is tuned according to condition~\eqref{eq:sufficient_condition} which guarantees network stability regardless of the RTT. In contrast, the goodput achieved by ARED suffers as RTT increases. Similar observations hold for the case of flow completion time, which is a lower-is-better metric as opposed to the former two. Our AQM has a marginally higher mean flow completion time as compared to DropTail, with the standard deviation of the two indicating that the observed flow completion time for a give flow is likely to be similar for the two policies. Meanwhile, the flow completion time of ARED is larger than that of our AQM, and shows a substantial increase as RTT increases. For the parking-lot topology, the three policies show similar relative performance. Additionally, for this topology, the standard deviation of throughput and goodput of ARED is much larger than that of our AQM and DropTail. This indicates that the hosts would experience pronounced throughput fluctuations during the life of a connection. Likewise, a large standard deviation in flow completion time indicates a large variation in the time taken by hosts to complete the file transfer, and hence is reflective of unfair network behaviour.

\emph{System-oriented performance}: From the plots showing the link utilisation, it can be observed that DropTail guarantees a marginally better utilisation of the link than our AQM, with the gap being less for the parking-lot topology. ARED lacks in this aspect, with the utilisation reducing as RTT increases. Though this is seen in both topologies, it is more pronounced in the parking-lot topology. This is because DropTail does not take any action towards congestion mitigation, it simply drops packets once the buffer is full. In doing so, it keeps the buffer full most of the time, thereby never starving the link, and hence ensuring a higher utilisation. On the other hand, our AQM and ARED are meant to drop packets in a manner that congestion is alleviated. By dropping packets appropriately, they ensure that the queue size remains small, which could sometimes lead to loss of link utilisation. This is indeed reflected in the queueing delay and packet loss measurements. One can observe that our AQM and ARED ensure a much smaller queueing delay compared to DropTail. Although our AQM has a higher queueing delay than ARED, it can be observed that this delay does not reflect in the flow completion time. The observations regarding queueing delay are consistent across the two topologies. 
From the packet loss percentage, it is seen that in the single bottleneck dumbbell topology, DropTail drops the least number of packets while ARED drops the most, our AQM lies between the two. For large RTTs, the fraction of packets dropped by our AQM is very similar to that of DropTail, which indicates that our AQM is able to maintain substantially less queueing delay while dropping just marginally more packets. In the parking-lot topology, we see that our AQM exhibits a lower packet loss compared to DropTail, with the reduction being more evident for large RTTs. ARED shows a larger packet loss for small RTTs, and nearly converges to that of our AQM as RTT increases. Arguably, our AQM ensures better packet loss than DropTail in networks where links appear in tandem. These results show that our AQM drop packets optimally, signalling congestion to the TCP hosts who in turn respond by reducing their sending rates. This is the reason why the threshold-based policy implemented using the proposed mechanism is able to guarantee a performance similar to DropTail in terms of most metrics, while substantially outperforming DropTail in queueing delay.

\section{Conclusions}
\label{sec:Conclusions}
Most Active Queue Management (AQM) strategies aimed at alleviating network congestion have tunable parameters that need to be co-designed with parameters of the transport protocols, such as TCP, implemented in the end hosts. Toward facilitating the deployment of such AQMs, we examined the possibility of estimating TCP parameters through packet traces accessible to the router and co-designing AQM parameters with these. In particular, we considered a threshold-based queue policy which requires the packet-dropping threshold to be co-designed with TCP parameters. This work outlines a mechanism that detects network congestion, predicts the evolution of the congestion window at end hosts, estimates the required TCP parameter, and produces a value of the packet-dropping threshold. Appropriate algorithms are outlined for congestion detection and threshold computation. The mechanism used to estimate the TCP parameter is supported by arguments based on existing results from the analysis of TCP fluid models. 
We conducted packet-level simulations to evaluate the performance of the threshold-based policy deployed as per the proposed mechanism. The results show that our implementation of the AQM performs equivalent to DropTail in terms of some metrics, and particularly outperforms it in terms of the queueing delay. It is also seen to be better than ARED, which is implemented in most Linux-based routers today, in terms of most performance metrics. Although the proposed mechanism is grounded in mathematical models that account for AIMD behaviour of TCP alone, our simulations reveal that it performs well in networks carrying unresponsive UDP flows and bursty short-lived TCP flows as well. 
The proposed mechanism appears to be a promising solution towards TCP-aware queue management through appropriate deployment of AQMs to alleviate congestion in packet-forwarding devices. It is particularly suited for SDN architectures, which provide the flexibility of control- and data-plane separation. 
\subsection{Future work}
This work relies on results from analysis of a TCP fluid model that captures the behaviour of the AIMD congestion control algorithm. However, the proposed mechanism only needs to estimate the multiplicative decrease parameter, regardless of those that govern the additive increase. This raises the question of whether the proposed mechanism would work for traffic comprising TCP flows that employ a congestion control algorithm that decreases the congestion window multiplicatively upon detecting congestion, but increases it using a different approach. For example, CUBIC (default in Linux OS), increases its congestion window in a non-linear manner, but reduces it multiplicatively in the event of congestion. It is desirable to explore whether the proposed mechanism would work for such TCP flows. Further, it is desirable to evaluate the performance of the proposed mechanism in switches that support Explicit Congestion Notification (ECN), this would also pave the way for examining its performance with TCP variants that rely on ECN marks, instead of packet drops, for congestion feedback.

\balance
\bibliographystyle{IEEEtran}  
\bibliography{main} 

\begin{thebibliography}{10}
\providecommand{\url}[1]{#1}
\csname url@samestyle\endcsname
\providecommand{\newblock}{\relax}
\providecommand{\bibinfo}[2]{#2}
\providecommand{\BIBentrySTDinterwordspacing}{\spaceskip=0pt\relax}
\providecommand{\BIBentryALTinterwordstretchfactor}{4}
\providecommand{\BIBentryALTinterwordspacing}{\spaceskip=\fontdimen2\font plus
\BIBentryALTinterwordstretchfactor\fontdimen3\font minus \fontdimen4\font\relax}
\providecommand{\BIBforeignlanguage}[2]{{%
\expandafter\ifx\csname l@#1\endcsname\relax
\typeout{** WARNING: IEEEtran.bst: No hyphenation pattern has been}%
\typeout{** loaded for the language `#1'. Using the pattern for}%
\typeout{** the default language instead.}%
\else
\language=\csname l@#1\endcsname
\fi
#2}}
\providecommand{\BIBdecl}{\relax}
\BIBdecl

\bibitem{McFillin2022}
\BIBentryALTinterwordspacing
A.~McFillin, ``Bufferbloat may be solved, but it’s not over yet,'' accessed July 2025. [Online]. Available: \url{https://blog.apnic.net/2020/01/22/bufferbloat-may-be-solved-but-its-not-over-yet/}
\BIBentrySTDinterwordspacing

\bibitem{gettys2012bufferbloat}
J.~Gettys and K.~Nichols, ``Bufferbloat: dark buffers in the internet,'' \emph{Communications of the ACM}, vol.~55, no.~1, pp. 57--65, 2012.

\bibitem{allman2012comments}
M.~Allman, ``Comments on bufferbloat,'' \emph{ACM SIGCOMM Computer Communication Review}, vol.~43, no.~1, pp. 30--37, 2012.

\bibitem{cerf2014bufferbloat}
V.~G. Cerf, ``Bufferbloat and other {Internet} challenges,'' \emph{IEEE Internet Computing}, vol.~18, no.~5, pp. 80--80, 2014.

\bibitem{nichols2012controlling}
K.~Nichols and V.~Jacobson, ``Controlling queue delay,'' \emph{Communications of the ACM}, vol.~55, no.~7, pp. 42--50, 2012.

\bibitem{harrison2023buffer}
P.~Harrison, ``Buffer-bloated router? {How to prevent it and improve performance},'' \emph{Communications of the ACM}, vol.~66, no.~6, pp. 73--77, 2023.

\bibitem{adams2012active}
R.~Adams, ``Active queue management: A survey,'' \emph{IEEE Communications Surveys \& Tutorials}, vol.~15, no.~3, pp. 1425--1476, 2012.

\bibitem{baker2015ietf}
F.~Baker and G.~Fairhurst, ``{IETF} recommendations regarding active queue management,'' Tech. Rep., 2015.

\bibitem{toopchinezhad2025machine}
M.~P. Toopchinezhad and M.~Ahmadi, ``Machine learning approaches for active queue management: A survey, taxonomy, and future directions,'' \emph{Computer Networks}, p. 111174, 2025.

\bibitem{bouacida2018practical}
N.~Bouacida and B.~Shihada, ``Practical and dynamic buffer sizing using {LearnQueue},'' \emph{IEEE Transactions on Mobile Computing}, vol.~18, no.~8, pp. 1885--1897, 2018.

\bibitem{jung2023rlecn}
E.-S. Jung, H.~S. Kim \emph{et~al.}, ``{RLECN—A} learning based dynamic threshold control of {ECN},'' \emph{ICT Express}, vol.~9, no.~6, pp. 1007--1012, 2023.

\bibitem{cheng2024pet}
K.~Cheng, T.~Wang, X.~Du, S.~Du, and H.~Cai, ``{PET}: Multi-agent independent {PPO}-based automatic {ECN} tuning for high-speed data center networks,'' \emph{arXiv preprint arXiv:2405.11956}, 2024.

\bibitem{jiang2021machine}
H.~Jiang, Q.~Li, Y.~Jiang, G.~Shen, R.~Sinnott, C.~Tian, and M.~Xu, ``When machine learning meets congestion control: A survey and comparison,'' \emph{Computer Networks}, vol. 192, p. 108033, 2021.

\bibitem{kundel2018p4}
R.~Kundel, J.~Blendin, T.~Viernickel, B.~Koldehofe, and R.~Steinmetz, ``{P4-Codel}: Active queue management in programmable data planes,'' in \emph{2018 IEEE Conference on Network Function Virtualization and Software Defined Networks {(NFV-SDN)}}.\hskip 1em plus 0.5em minus 0.4em\relax IEEE, 2018, pp. 1--4.

\bibitem{de2026ired}
L.~C. De~Almeida, P.~D. Maciel, R.~Pasquini, C.~Papagianni, and F.~L. Verdi, ``{iRED}: A disaggregated {P4-AQM} fully implemented in programmable data plane hardware,'' \emph{IEEE Transactions on Networking}, 2026.

\bibitem{de2016pi2}
K.~De~Schepper, O.~Bondarenko, I.-J. Tsang, and B.~Briscoe, ``{PI2: A linearized AQM for both classic and scalable TCP},'' in \emph{Proceedings of the 12th International on Conference on emerging Networking EXperiments and Technologies}, 2016, pp. 105--119.

\bibitem{la2004instability}
R.~J. La, ``Instability of a tandem network and its propagation under {RED},'' \emph{IEEE Transactions on Automatic Control}, vol.~49, no.~6, pp. 1006--1011, 2004.

\bibitem{hollot2002analysis}
C.~V. Hollot, V.~Misra, D.~Towsley, and W.~Gong, ``Analysis and design of controllers for {AQM} routers supporting {TCP} flows,'' \emph{IEEE Transactions on Automatic Control}, vol.~47, no.~6, pp. 945--959, 2002.

\bibitem{manjunath2018stability}
S.~Manjunath and G.~Raina, ``Stability and performance of compound {TCP} with a proportional integral queue policy,'' \emph{IEEE Transactions on Control Systems Technology}, vol.~27, no.~5, pp. 2139--2155, 2018.

\bibitem{manjunath2019compound}
------, ``{Compound TCP with Random Early Detection (RED)}: stability, bifurcation and performance analyses,'' \emph{arXiv preprint arXiv:1907.06302}, 2019.

\bibitem{raina2015stability}
G.~Raina, S.~Manjunath, S.~Prasad, and K.~Giridhar, ``Stability and performance analysis of compound {TCP} with {REM} and drop-tail queue management,'' \emph{IEEE/ACM Transactions on Networking}, vol.~24, no.~4, pp. 1961--1974, 2015.

\bibitem{riley2010ns}
G.~F. Riley and T.~R. Henderson, ``The {NS}-3 network simulator,'' in \emph{Modeling and tools for network simulation}.\hskip 1em plus 0.5em minus 0.4em\relax Springer, 2010, pp. 15--34.

\bibitem{bosshart2014p4}
P.~Bosshart, D.~Daly, G.~Gibb, M.~Izzard, N.~McKeown, J.~Rexford, C.~Schlesinger, D.~Talayco, A.~Vahdat, G.~Varghese \emph{et~al.}, ``{P4}: Programming protocol-independent packet processors,'' \emph{ACM SIGCOMM Computer Communication Review}, vol.~44, no.~3, pp. 87--95, 2014.

\bibitem{misra2000fluid}
V.~Misra, W.-B. Gong, and D.~Towsley, ``Fluid-based analysis of a network of {AQM} routers supporting {TCP} flows with an application to red,'' in \emph{Proceedings of the conference on Applications, Technologies, Architectures, and Protocols for Computer Communication}, 2000, pp. 151--160.

\bibitem{raina2005part}
G.~Raina, D.~Towsley, and D.~Wischik, ``{Part II: Control theory for buffer sizing},'' \emph{ACM SIGCOMM Computer Communication Review}, vol.~35, no.~3, pp. 79--82, 2005.

\bibitem{han2005synchronization}
H.~Han, C.~V. Hollot, D.~Towsley, and Y.~Chait, ``Synchronization of {TCP} flows in networks with small droptail buffers,'' in \emph{Proceedings of the 44th IEEE Conference on Decision and Control}.\hskip 1em plus 0.5em minus 0.4em\relax IEEE, 2005, pp. 6762--6767.

\bibitem{ghosh2020right}
D.~Ghosh, K.~Jagannathan, and G.~Raina, ``Right buffer sizing matters: Some dynamical and statistical studies on compound {TCP},'' \emph{Performance Evaluation}, vol. 139, p. 102095, 2020.

\bibitem{raina2005buffer}
G.~Raina and D.~Wischik, ``Buffer sizes for large multiplexers: {TCP} queueing theory and instability analysis,'' in \emph{Next Generation Internet Networks, 2005}.\hskip 1em plus 0.5em minus 0.4em\relax IEEE, 2005, pp. 173--180.

\bibitem{SomitGitHub}
S.~Gond, ``Repository name: {MTP},'' \url{https://github.com/somitgond/mtp}, 2026.

\bibitem{rfc7928}
\BIBentryALTinterwordspacing
N.~Kuhn, P.~Natarajan, N.~Khademi \emph{et~al.}, ``{Characterization Guidelines for Active Queue Management (AQM)},'' IETF, Tech. Rep. RFC 7928, Jul. 2016. [Online]. Available: \url{https://www.rfc-editor.org/info/rfc7928}
\BIBentrySTDinterwordspacing

\bibitem{katabi2002congestion}
D.~Katabi, M.~Handley, and C.~Rohrs, ``Congestion control for high bandwidth-delay product networks,'' in \emph{Proceedings of the 2002 conference on Applications, technologies, architectures, and protocols for computer communications}, 2002, pp. 89--102.

\bibitem{wan2025acc}
Z.~Wan, J.~Zhang, S.~Cheng, Y.~Chen, T.~Pan, P.~Lin, and T.~Huang, ``{ACC}: addressing performance limitations in datacenters with atomic congestion control,'' in \emph{IEEE INFOCOM 2025-IEEE Conference on Computer Communications}.\hskip 1em plus 0.5em minus 0.4em\relax IEEE, 2025, pp. 1--10.

\bibitem{fathalli2025network}
S.~Fathalli, E.~N. Weyulu, D.~Zeynali, B.~Chandrasekaran, and A.~Feldmann, ``Network-assisted congestion feedback,'' \emph{IEEE Transactions on Network and Service Management}, vol.~23, pp. 1797--1815, 2025.

\end{thebibliography}

\end{document}